\newcommand{\ckl}[1]{{\color{ForestGreen}#1}}
\newcommand{\cbb}[1]{\cos^{#1}({\beta})}
\newcommand{\sbb}[1]{\sin^{#1}({\beta})}
\newcommand{\cbf}[2]{\cos^{#1}\left(\frac{\beta}{{#2}}\right)}
\newcommand{\sbf}[2]{\sin^{#1}\left(\frac{\beta}{{#2}}\right)}
\newtheorem{lemma}{Lemma}
\Crefname{figure}{Fig.}{Figs.}
\Crefname{equation}{Eq.}{Eqs.}
\Crefname{section}{Sec.}{Secs.}
\Crefname{appendix}{Appx.}{Appcs.}
\DeclarePairedDelimiter\ceil{\lceil}{\rceil}
\DeclarePairedDelimiter\floor{\lfloor}{\rfloor}
\DeclareMathOperator{\mean}{mean}
\newsavebox{\boxA}
\savebox\boxA{{ \begin{quantikz}
\lstick{${A}$}  & \gate[2]{e^{\beta  {S}^{-}_{AB}}} &\qw & \qw\\
 \lstick{${B}$} &  &\gate[2]{e^{\beta  {S}^{-}_{BC}}} &\qw \\
\lstick{${C}$} & \qw & &\qw
\end{quantikz}

}}
\newsavebox{\boxB}
\savebox\boxB{\raisebox{-8ex}{
\begin{tikzpicture}
\begin{scope}[>={Stealth[black]},
              every node/.style={fill=white,circle},
              every edge/.style={draw=black,thick,fill}]
\node (A) at (0,0) {}; 
\path [->] (1,-2.2)edge(2,-2.2);
\path [->] (1,-2.2)edge(1.5,-1.2);
\path [->] (1,-2.2)edge(1.5,-3.2);
\end{scope}
\node at (1.8,-1.1) {$\boldsymbol{T}$};
\node at (2.5,-2.2) {$\boldsymbol{S}_{AB}$};
\node at (2,-3.2) {$\boldsymbol{S}_{BC}$};


\end{tikzpicture}



}}
\newsavebox{\boxC}
\savebox{\boxC}{ \begin{quantikz}
\lstick{${A}$}  &\qw & \gate[2]{e^{\beta  {S}^{-}_{AB}}} &\qw & \qw\\
 \lstick{${B}$} & \gate[2]{e^{\beta  {S}^{-}_{BC}}}&  &\gate[2]{e^{\beta  {S}^{-}_{BC}}} &\qw \\
\lstick{${C}$} & & \qw & &\qw
\end{quantikz}
}
\newsavebox{\boxD}
\savebox\boxD{\raisebox{-8ex}{\begin{tikzpicture}
\begin{scope}[>={Stealth[black]},
              every node/.style={fill=white,circle},
              every edge/.style={draw=black,thick,fill}]
\node (A) at (0,0) {}; 
\path [->] (1,-2.2)edge(2,-1.8);
\path [->] (1,-2.2)edge(2,-2.6);
\path [->] (1,-2.2)edge(1.5,-1.2);
\path [->] (1,-2.2)edge(1.5,-3.2);
\end{scope}
\node at (1.8,-1.1) {$\boldsymbol{T}$};
\node at (2.5,-1.8) {$\boldsymbol{S}_{AB}$};
\node at (2.5,-2.6) {$\boldsymbol{S}_{BC}$};
\node at (2,-3.2) {$\boldsymbol{S}_{AC}$};


\end{tikzpicture}
}}
\newsavebox{\boxE}
\savebox\boxE{\raisebox{-8ex}{ \begin{quantikz}
\lstick{${j}$}   &\gate{R_z(2\gamma q ) } &\qw,
\end{quantikz}
}}
\newsavebox{\boxF}
\savebox\boxF{\raisebox{-8ex}{ \begin{quantikz}
\lstick{${j}$}   & \ctrl{1} &\qw &\ctrl{1} &\qw \\
\lstick{${k}$}  & \targ{} &\gate{R_z(2\gamma Q ) } &\targ{}& \qw
\end{quantikz}
}}
\newsavebox{\boxTa}
\savebox\boxTa{{ \begin{quantikz}
\lstick{${A}$}  &\qw & \gate[3]{e^{\beta  {P}^{-}_{ABC}}} &\qw & \qw\\
 \lstick{${B}$} & \qw&  &\qw &\qw \\
\lstick{${C}$} & \gate[2]{e^{\beta  {S}^{-}_{CD}}}& \qw & \gate[2]{e^{\beta  {S}^{-}_{CD}}}&\qw \\
\lstick{${D}$} & & \qw & &\qw 
\end{quantikz}
}}
\newsavebox{\boxTb}
\savebox\boxTb{\raisebox{-8ex}{\begin{tikzpicture}
\begin{scope}[>={Stealth[black]},
              every node/.style={fill=white,circle},
              every edge/.style={draw=black,thick,fill}]
\node (A) at (0,0) {}; 
\path [->] (1,-2.2)edge(2,-1.8);
\path [->] (1,-2.2)edge(2,-2.6);
\path [->] (1,-2.2)edge(1.5,-1.2);
\path [->] (1,-2.2)edge(1.5,-3.2);
\end{scope}
\node at (1.8,-1.1) {$\boldsymbol{T}$};
\node at (2.5,-1.8) {$\boldsymbol{S}_{CD}$};
\node at (2.5,-2.6) {$\boldsymbol{P}_{ABC}$};
\node at (2,-3.2) {$\boldsymbol{P}_{ABD}$};


\end{tikzpicture}
}}
\newsavebox{\boxTc}
\savebox\boxTc{{}}
\newsavebox{\boxTd}
\savebox\boxTd{\raisebox{-8ex}{}}
\newsavebox{\boxTe}
\savebox\boxTe{{}}
\newsavebox{\boxTf}
\savebox\boxTf{\raisebox{-8ex}{}}
\begin{document}

\title{Quantifying the advantages of applying quantum approximate algorithms to portfolio optimisation}

\author{\begingroup
\hypersetup{urlcolor=navyblue}
\href{https://orcid.org/0009-0001-4941-5448}{Haomu Yuan}
\endgroup}
\email[Haomu Yuan ]{hy374@cam.ac.uk}
\affiliation{Cavendish Laboratory, Department of Physics, University of Cambridge, Cambridge CB3 0HE, UK}

\author{Christopher K. Long}
\affiliation{Cavendish Laboratory, Department of Physics, University of Cambridge, Cambridge CB3 0HE, UK}

\author{Hugo V. Lepage}
\affiliation{Cavendish Laboratory, Department of Physics, University of Cambridge, Cambridge CB3 0HE, UK}

\author{Crispin H. W. Barnes}
\affiliation{Cavendish Laboratory, Department of Physics, University of Cambridge, Cambridge CB3 0HE, UK}

\date{\today}

\begin{abstract}
We present a quantum algorithm for portfolio optimisation. Specifically, We present an end-to-end quantum approximate optimisation algorithm (QAOA) to solve the discrete global minimum variance portfolio (DGMVP) model. This model finds a portfolio of risky assets with the lowest possible risk contingent on the number of traded assets being discrete.  We  provide a complete pipeline for this model and analyses its viability for noisy intermediate-scale quantum computers. We design initial states, a cost operator, and ans\"atze with hard mixing operators within a binary encoding. Further, we perform numerical simulations to analyse several optimisation routines, including layerwise optimisation, utilising COYBLA and dual annealing. Finally, we consider the impacts of thermal relaxation and stochastic measurement noise. We find dual annealing with a layerwise optimisation routine provides the most robust performance. We observe that realistic thermal relaxation noise levels preclude quantum advantage. However, stochastic measurement noise will dominate when hardware sufficiently improves. Within this regime, we numerically demonstrate a favourable scaling in the number of shots required to obtain the global minimum---an indication of quantum advantage in portfolio optimisation.

\end{abstract}

\maketitle

\section{Introduction}
Variational quantum algorithms promise to be powerful tools for approximate optimisation of both classical and quantum problems~\cite{nielsenQuantumComputationQuantum2010, mcardleVariationalAnsatzbasedQuantum2019, cerezoVariationalQuantumAlgorithms2021}. The Quantum Approximate Optimization Algorithm (QAOA) is one of the most well-known quantum approximate optimisers. It was first introduced to solve binary optimisation problems, MaxCut and MaxSAT, and was then applied to problems in chemistry, quantum field theory, and portfolio optimisation~\cite{farhiQuantumApproximateOptimization2014,hadfieldQuantumApproximateOptimization2019,eggerWarmstartingQuantumOptimization2021,bravyiHybridQuantumclassicalAlgorithms2022,farhiQuantumApproximateOptimization2022,Fernandez-Lorenzo_2021}.
Compared with other applications, quantum portfolio optimisation is arguably the most successful because it relates directly to real-world hedge funds and investment management.
Other quantum algorithms have been developed tackling portfolio optimisation with simple models~\cite{ peruzzoVariationalEigenvalueSolver2014, slateQuantumWalkbasedPortfolio2021,brandaoFasterQuantumClassical2022,rebentrostQuantumComputationalFinance2024}. However, many still require validation in the presence of quantum noise and stochastic noise using experiment or simulation. The QAOA has developed applications on more sophisticated portfolio optimisation models with a substantial body of simulations and benchmarks~\cite{hodsonPortfolioRebalancingExperiments2019,brandhoferBenchmarkingPerformancePortfolio2022,hermanSurveyQuantumComputing2022,chenQuantumPortfolioOptimization2023}. Experiments on current quantum devices suggest that QAOA is capable of reaching a regime where it could challenge the best classical algorithms~\cite{zhouQuantumApproximateOptimization2020}.

After Markowitz introduced the mean-variance portfolio model, portfolio optimisation became an important research field in financial management~\cite{markowitzPortfolioSelection1952, kolm60YearsPortfolio2014}. However, any departure of the asset return distribution from normality can significantly affect the mean-variance analysis approximation to the optimal portfolio allocation~\cite{jondeauOptimalPortfolioAllocation2006}.
The complexity and interdependence of global economies make it challenging to predict economic stability, driving investors to pay more attention to portfolio risk.
The Global Minimum Variance Portfolio (GMVP) is a known risk-focus portfolio that yielded superior results in many empirical studies~\cite{haugenNewFinanceCase1994,clarkeMinimumvariancePortfoliosUS2006,clarkeMinimumvariancePortfolioComposition2011,chanPortfolioOptimizationForecasting2015}.
However, trading in a practical scenario is usually quantified by discrete numbers of assets because financial products such as stocks, options and futures are not usually infinitely divisible. The DGMVP is the discretisation of the GMVP, in which the number of traded assets is discrete. Therefore, a more efficient method to estimate the DGMVP is of great interest to investors and quantitative analysts. 

A mathematical summarisation of the DGMVP model can be written as
\begin{equation}\label{qdp}
    \hspace*{-1.1cm} 
    \begin{aligned}
        \underset{\bm{w}}{\operatorname*{argmin}} \ & f(\bm{w}) = \bm{w}^\intercal \bm{\Sigma} \bm{w}
    \end{aligned}
\end{equation}
\vspace{-2.29em}
\begin{subequations}
\begin{align}
    \quad \mathrm{s.t.} \quad & \boldsymbol{w}^\intercal \mathbf{1}_n = 1, \tag{2a} \label{qdp:2a}\\
    \quad & {w}_i \in [0,1], \tag{2b} \label{qdp:2b}\\
    \quad & {w}_i/a \in \mathbb{Z}, \ i = 1,2,\dots,n \tag{2c} \label{qdp:2c}
\end{align}
\end{subequations}
where $\boldsymbol{w} = (w_1,w_2,\dots, w_n)^\intercal $ is an $n$-dimensional vector of weights for assets $1, \dots, n$, such as stocks, futures, and options, $\mathbf{1}_n$ is an $n$-dimensional ones vector, and $a$ is the unit trading lot for assets. The $n\times n$ covariance matrix $\boldsymbol{\Sigma}$ and its elements $\sigma_{ij}$ quantify the risk between each pair of assets. The first constraint \Cref{qdp:2a} is the budget constraint, the second constraint \Cref{qdp:2b} ensures no short investments, and the third constraint \Cref{qdp:2c} is the discretisation. 
In this article, we will present, to the best of our knowledge, the first end-to-end DGMVP solver based on the QAOA. 
We design a variational QAOA ansatz and quantify the use of quantum resources. The simulation analysis gives insight into solving other portfolio models and constrained mixed quadratic integer programming with quantum approximate algorithms in the future.

The rest of this article is structured as follows. In \Cref{sec:QAOAreview}, we review the original QAOA and general QAOA methods.
In \Cref{sec:QuantOptMeth}, we outline a set of QAOA ansatz designs and compare their performance in solving the DGMVP model.
In \Cref{sec:Simulation}, we compare different optimisers and optimisation methods to get the most efficient optimisation strategy.
In \Cref{sec:analysis}, we analyse the scaling of the quantum algorithm performance as a function of the asset number and the coarseness of the discretisation and compare it with classical scaling. 
In \Cref{sec:noise}, we add quantum decoherence noise and apply a post-selection method to improve the approximation.
Finally, we conclude in \Cref{sec:conclusion}.

\section*{Review of QAOA}\label{sec:QAOAreview}
Farhi \textit{et al}. first introduced the Quantum Approximate Optimisation Algorithm (the original QAOA) to solve the combinatorial graph problem MaxCut and showed potential advantages~\cite{farhiQuantumApproximateOptimization2014}. 
The original QAOA was then generalised by the Quantum Alternating Operator Ansatz (the general QAOA) by redefining a family of cost and mixing operators~\cite{hadfieldQuantumApproximateOptimization2019}. To avoid ambiguity, we use the QAOA to represent the general QAOA in the rest of this article, which typically consists of:
\begin{itemize}
    \item [1.] \textbf{Initial state preparation.} We start with a pre-processing step to adjust the zero register to a desired state such as
\begin{equation}
    \begin{aligned}
         |\operatorname{init}\rangle= {U_I} |0\rangle^{\otimes n},\label{Eq:inistate}
    \end{aligned}
\end{equation}
where ${U_I}$ is such a pre-processing operator acting on the $|0\rangle ^{\otimes n}$ state. For example, in the MaxCut problem~\cite{farhiQuantumApproximateOptimization2014}, a Hadamard transformation is used to prepare the initial state:
\begin{align}
    {U_I} = \operatorname{H}^{\otimes n},  \quad     |\operatorname{init}\rangle=\operatorname{H}^{\otimes n}|0\rangle^{\otimes n}, \label{Eq:iniHstate} 
\end{align}
where $\operatorname{H}$ is a Hadamard gate. A good initial state will increase the performance of quantum optimisation. Other initial states include the Hartree-Fock~\cite{doi:10.1126/science.abb9811,grimsleyAdaptiveProblemtailoredVariational2023}, Dicke~\cite{cookQuantumAlternatingOperator2020}, and warm-started states~\cite{eggerWarmstartingQuantumOptimization2021,Tate2023}. The warm-started preparation method indicates a strong advantage and beats the GW approximation in solving the MaxCut problem in a low-depth circuit~\cite{Tate2023}.
\item [2.] \textbf{Cost operator design.}
The cost operator $U(C,\gamma)$ is defined such that 
\begin{align}
U(C,\gamma) = e^{-i\gamma C}, \label{Eq:MaxCutU1}
\end{align}
where $\gamma$ is the phase parameter and $C$ is the cost Hamiltonian. The parameter $\gamma$ is usually restricted by a range. For example, $\gamma \in [0,2\pi]$ in the MaxCut problem. 
The cost Hamiltonian $C$ encodes a cost function $f$, such as the one in \Cref{qdp}, as a series of Pauli-Z strings such that
\begin{align}
     C|z\rangle = f(z) |z\rangle, \label{Eq:MaxCutH1}
\end{align}
where $|z\rangle$ is a computational basis state.
The cost operator is sometimes named a phase operator because it only generates a phase if acting on a state from the Pauli-Z eigenbasis.
\item [3.] \textbf{Mixing operator design.} 
A mixing operator is another parameterised operator alternating with the cost operator. In the MaxCut problem, the authors use a mixing operator $U(B,\beta)$ with $B$ defined by Pauli-X operators:
\begin{align}
     B =\sum_{j=1}^n X_j,  \quad U(B,\beta) = e^{-i\beta B} =\prod_{j=1}^{n}e^{-i \beta X_j}, \label{Eq:MaxSatCutU2}
\end{align}
where $\beta\in[0,\pi]$, $n$ is the the number of qubits, and $X_j$ is the Pauli-X operator acting on the $j$th qubit. A variety of mixing operators have been developed for a range of problems utilising sophisticated structures and providing other functionalities~\cite{hadfieldQuantumApproximateOptimization2019,wangMixersAnalyticalNumerical2020,bartschiGroverMixersQAOA2020}. The design of mixing operators is becoming a notable research field.
\item [4.] \textbf{Parameterised alternating ansatz construction.}
A parameterised alternating circuit consists of cost operators and mixing operators applied in an alternating fashion to the initial state as shown in \Cref{fig:ParaAlQAOA}. The output state will be
\begin{equation}\label{altCirc}
    \begin{aligned}
        |\bm{\gamma},\bm{\beta}\rangle\coloneqq U(B,\beta_p) U(C,\gamma_p)...U(B,\beta_1) U(C,\gamma_1)  |S\rangle,
    \end{aligned}
\end{equation}
where $p$ denotes the number of layers (one mixing operator and one cost operator) of the QAOA, $\bm{\gamma}= (\gamma_1, \gamma_2, ..., \gamma_p)$, and $\bm{\beta}=(\beta_1, \beta_2, ..., \beta_p)$. 
\item [5.] \textbf{Parameter optimisation.}
The parameter array $(\bm{\gamma}, \bm{\beta})$ is optimised by an objective function. We utilise the expectation value of the cost Hamiltonian as the objective function, but other common choices are Conditional Value-at-Risk (CVaR), or variance~\cite{barkoutsosImprovingVariationalQuantum2020, zhangVariationalQuantumEigensolvers2022}.
If we consider a more realistic situation each measurement of the quantum circuit yields a computational basis state $|\bm{w}\rangle$ representing a classical value $\bm{w}$. The corresponding cost function value is $f(\bm{w})$, and each measurement constitutes a single function access.
After $k$ measurements, we can estimate the expectation value of cost Hamiltonian $C$ as 
\begin{align}
\left< C \right>_{k} \coloneqq \langle\bm{\gamma},\bm{\beta}|C|\bm{\gamma},\bm{\beta}\rangle_{k}=\frac{1}{k}{\sum_{i = 1}^{k} f(\bm{w}_i)}.  \label{appFp}
\end{align}
where such an estimation uses $k$ function accesses.
Classical optimisers are used to optimise the cost and mixing operators' parameters~\cite{PhysRevResearch.2.043246,Streif_2020,Fernandez-Lorenzo_2021,fernandez-pendasStudyPerformanceClassical2022,pellow-jarmanEffectClassicalOptimizers2024}. 
\end{itemize}

\section{Quantum optimisation methods} \label{sec:QuantOptMeth}
In this section, we first binary-encode the weight vector $\boldsymbol{w}$. Then, this section presents the three main elements of the QAOA within the encoding: initial states, a cost operator, and a hard mixing operator.

\begin{figure}[h]
	\begin{center}
		\includegraphics{./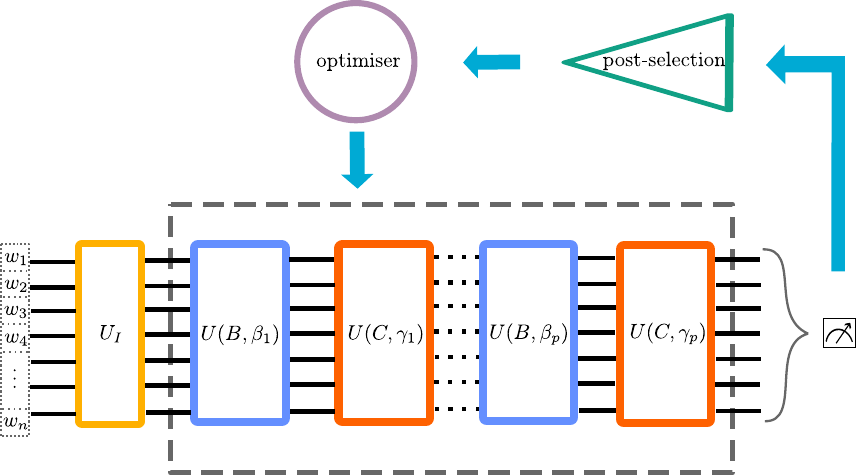}
		\caption{A parameterised QAOA ansatz for predicting the DGMVP solution. The variables $w_1,w_2,\dots, w_n$ in the dotted squares are the binary-encoded weight variables. The unitary ${U_I}$ prepares the initial state. The grey dashed box includes the mixing ($U(B,\beta)$) and cost ($U(C,\gamma)$) operators. A tunable parameter controls each operator. After measurements, a post-selection process will be applied to filter out unconstrained results and feed them to the classical optimiser.}\label{fig:ParaAlQAOA}
	\end{center}
\end{figure} 

\subsection{Binary block encoding method}\label{sec:binaryblockenco}
In previous quantum portfolio optimisation algorithms~\cite{eggerWarmstartingQuantumOptimization2021, hodsonPortfolioRebalancingExperiments2019, slateQuantumWalkbasedPortfolio2021}, the authors encode at most three trading positions (long, short, and on-hold) with two binary variables by treating the first variable as a long position and the second as a short position. 
However, the DGMVP and most portfolio strategies usually require an encoding with more positions encoded as multiples of the unit trading lot $a$.
A quasi-binary encoding method provided in~\cite{chenQuantumPortfolioOptimization2023} uses a large enough factor $1/a$ to amplify all weights variables to integers. 
This article uses a binary encoding method with a fixed length for all assets. 

\begin{figure}[h]
	\begin{center}
		\includegraphics{./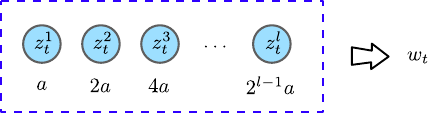}
		\caption{A diagram of binary block encoding for the $t$th asset investment weight $w_t$. The circles represent binary variables $z_t^1, z_t^2,\dots, z_t^l$ from measuring qubits on the Pauli-Z basis. The weight of variables $a, 2a, \dots, 2^{l-1} a$ are shown at the bottom of the corresponding variables circles. Each $w_t$ combines an $l$ length block of binary variables from the measurement of qubits. }\label{fig:binaryenc}
	\end{center}
\end{figure}

The variable $w_t$ can be encoded with a block of $l$ binary variables (see \Cref{fig:binaryenc}),
\begin{equation}\label{eq:measureNot}
	\begin{aligned}
		w_t = \sum_{k=1}^{l} b^k z^k_{t},
	\end{aligned}
\end{equation}
where $z^k_{t}\in{ \{0,1\}}$ is a binary variable corresponding to the measurement of a qubit state in Pauli-Z basis, and $b^k$ is 
\begin{equation}\label{coeffic}
	\begin{aligned}
		b^{k} = 2^{k-1} a,
	\end{aligned}
\end{equation}
where $a$ is the binary precision and can be interpreted as the unit trading lot:
\begin{equation}\label{accuracy}
	a =\frac{1}{ \sum_{i=0}^{l-1} 2^i}=\frac{1}{2^l-1}.
\end{equation}
Then, the total number of shares is $1/a$. 
As block sizes increase, the binary encoding method will use exponentially fewer qubits than the one-hot encoding method~\cite{chancellorDomainWallEncoding2019}. 

\subsection{Initial state preparations}\label{sec:initialstate} 
A good design of an initial state for target applications can avoid initialising far away from the global optimum and reduce the convergence time in quantum optimisation~\cite{eggerWarmstartingQuantumOptimization2021, Tate2023}.
This article provides four initial state preparation methods for the DGMVP model: max-biased state, ranked warm-started state, approximated equal-weighted state, and random-weighted state. Experiments in \Cref{app:inistat} show max-biased state and ranked warm-started state outperform other methods. Thus, we focus on these two methods in the body and summarise the rest of the initial states in the \Cref{app:inistat}. 

\textit{The max-biased state $|\boldsymbol{w}\rangle_{mb}$:---}This state represents an all-in-one investment without diversification, as  
\begin{equation}\label{eq:maxbias}
    w_t = \left\{\begin{array}{l}
		1, t = a\\
		0, t\neq a.
	   \end{array}\right.
\end{equation}

\textit{The ranked warm-started state $|\boldsymbol{w}\rangle_{ws}$:---}This state represents an approximation to the DGMVP solution from the round-off of the GMVP model's 
 continuous solution $\tilde{\boldsymbol{w}}$. 
The values in $\tilde{\boldsymbol{w}}$ can be divided by the binary precision $a$ to get sets of reminders $\tilde{\boldsymbol{r}}$ and divisors $\tilde{\boldsymbol{z}}$: 
\begin{equation}\label{eq:cw}
	\begin{aligned}
		\tilde{z}_i = \floor*{\frac{\tilde{w}_i}{a}}, \quad \tilde{r}_i = \frac{\tilde{w}_i}{a} - \floor*{\frac{\tilde{w}_i}{a}}, \quad \forall \tilde{w}_i\in \tilde{\boldsymbol{w}}.
	\end{aligned}
\end{equation}
Further, we define the remaining budget $u\coloneqq 1/a - \sum_{\tilde{z}_i\in \tilde{\boldsymbol{z}}} \tilde{z}_i$.
First, we order the assets from the largest value $\tilde{r}_i$ to the smallest.
Then, we assign the remaining budget one-by-one to the first $u$ assets. 
A ranked warm-started state $|\boldsymbol{w}\rangle_{ws}$ of the DGMVP model prepares the asset weights as 
\begin{equation}\label{eq:ws}
	w_{i} = \left\{\begin{array}{l}
		a\tilde{z}_{i} +a, i\leq u\\
		a\tilde{z}_{i}, i>u
		\end{array}\right. .
\end{equation}

\subsection{The design of cost Hamiltonian and cost operator for the DGMVP model}
This section introduces the construction of the cost Hamiltonian $C$ for the DGMVP model, the corresponding cost operator, and the quantum circuit.
Using the binary block encoding from \Cref{sec:binaryblockenco}, the DGMVP problem in \Cref{qdp} can be rewritten as
\begin{equation}\label{transf}
    \begin{aligned} 
        \begin{array}{ll}
            \underset{\bm{x}}{\operatorname*{argmin}}  & f  = a^2 \sum_{i, j} \sigma_{ij}  x_i x_j  \\
             \quad \mathrm{s.t.} & \sum_{i} a x_i=1, \\
             & 0 \leq x_i \leq D, \ x_i \in \mathbb{Z}
        \end{array}
    \end{aligned}
\end{equation}
where $\bm{x}$ is a $n$ dimensional vector with elements $x_i = w_i/a = \sum_{k=1}^l b^k z_i^k$ (see in \Cref{sec:binaryblockenco}) and $D= 1/a$. Substituting for the $w_i$ we find
\begin{equation}\label{transfz}
    \begin{aligned}
          & f= 2 a^2 \sum_{i<j}\sum_{k_1,k_2}\sigma_{ij}b^{k_1}z_i^{k_1} b^{k_2} z_j^{k_2} + a^2 \sum_{i}\sum_{k_1,k_2}\sigma_{ii}b^{k_1}z_i^{k_1} b^{k_2} z_i^{k_2}.  
    \end{aligned}
\end{equation}

Using the relations
\begin{equation}\label{pauliBinary} 
		\frac{\mathds{1}-Z}{2} |0\rangle = 0 |0\rangle, \frac{\mathds{1}-Z}{2} |1\rangle = 1 |1\rangle,
\end{equation}
where $Z$ is a Pauli-Z operator and $\mathds{1}$ is the identity operator, and~\Cref{Eq:MaxCutU1}, the Hamiltonian $C$ can be written as 
\begin{equation}\label{addz}
	\begin{aligned}
  		C &=\frac{a^2}{2}\sum_{i<j}\sum_{k_1,k_2} \sigma_{ij} b^{k_1} b^{k_2} \left(Z^{k_1}_{i} Z^{k_2}_{j} -Z^{k_1}_{i} - Z^{k_2}_{j} + \mathds{1}\right)\\  	
  			& \quad + \frac{a^2}{4} \sum_{i}\sum_{k_1,k_2}  \sigma_{ii}b^{k_1} b^{k_2}  \left(Z^{k_1}_{i} Z^{k_2}_{i} - Z^{k_1}_{i} - Z^{k_2}_{i} +\mathds{1}\right),
	\end{aligned}
\end{equation}
where $Z^{k}_i$ is the Pauli-Z operator on the $k$th qubit of the $i$th asset.  
Since every weight operator use the same number of binary variables for the encoding, then 
\begin{equation}\label{eq:costrecombsupp3}
	\begin{aligned}
  		\frac{a^2}{4}\sum_{i}\sum_{k_1,k_2} \sigma_{ii}b^{k_1} b^{k_2}Z^{k_2}_{i} = \frac{a^2}{4}\sum_{i}\sum_{k_1,k_2} \sigma_{ii}b^{k_1} b^{k_2}Z^{k_1}_{i}
	\end{aligned}
\end{equation}
and
\begin{equation}\label{eq:costrecombsupp1}
	\begin{aligned}
  		& \quad \frac{a^2}{2}\sum_{i<j}\sum_{k_1,k_2} \sigma_{ij}b^{k_1} b^{k_2}Z^{k_1}_{i}  +\frac{a^2}{2}\sum_{i<j}\sum_{k_1,k_2} \sigma_{ij}b^{k_1} b^{k_2}Z^{k_2}_{j}  + \frac{a^2}{2}\sum_{i}\sum_{k_1,k_2} \sigma_{ii}b^{k_1} b^{k_2}Z^{k_2}_{i} \\
		& = \frac{a^2}{2}\sum_{i<j}\sum_{k_1,k_2} \sigma_{ij}b^{k_1} b^{k_2}Z^{k_1}_{i}  +\frac{a^2}{2}\sum_{j<i}\sum_{k_1,k_2} \sigma_{ij}b^{k_1} b^{k_2}Z^{k_1}_{i}  + \frac{a^2}{2}\sum_{i}\sum_{k_1,k_2} \sigma_{ii}b^{k_1} b^{k_2}Z^{k_1}_{i} \\
		& = \frac{a^2}{2} \sum_{i,j} \sum_{k_1,k_2} \sigma_{ij}b^{k_1} b^{k_2}Z^{k_1}_{i} = \frac{a^2}{2} \sum_{i,j} \sum_{k_2}b^{k_2}\sum_{k_1} \sigma_{ij} b^{k_1} Z^{k_1}_{i} = \frac{a^2}{2} \sum_{i,j} \frac{1}{a}\sum_{k_1} \sigma_{ij} b^{k_1} Z^{k_1}_{i} \\
		& = \frac{a}{2} \sum_{i} \sum_{k} \tilde{\sigma}_{i} b^{k} Z^{k}_{i},
	\end{aligned}
\end{equation}
where the second equality comes from the symmetry $\sigma_{ij} = \sigma_{ji}$, and $\tilde{\sigma}_{i}= \sum_j\sigma_{ij}$. By substituting the \Cref{eq:costrecombsupp3,eq:costrecombsupp1} into \Cref{addz}, the DGMVP Hamiltonian $C$ can be expressed as 
\begin{equation}\label{eq:hamiltonian}
	\begin{aligned}
  		C &=\frac{a^2}{2}\sum_{i<j}\sum_{k_1,k_2} \sigma_{ij}b^{k_1} b^{k_2} Z^{k_1}_{i} Z^{k_2}_{j} + \frac{a^2}{4} \sum_{i}\sum_{k_1\neq k_2}\sigma_{ii}b^{k_1} b^{k_2} Z^{k_1}_{i} Z^{k_2}_{i} - \frac{a}{2} \sum_{i} \sum_{k}\tilde{\sigma}_{i} b^{k} Z^{k}_{i}  +c,\\
	\end{aligned}
\end{equation}
where $c$ is a constant scalar, the $k_1\neq k_2$ condition in the second term is derived from $Z^{k_1}_{i} Z^{k_2}_{i} = \mathds{1}$ when $k_1 = k_2$. We can drop the constant offset $c$ because it can be considered as a global phase. The cost operator of the DGMVP Hamiltonian $C$ is
\begin{equation}\label{eq:Cz1}
	\begin{aligned}
		U(C, \gamma)  &= e^{-i \gamma C}\\
		& =  \prod_{i<j}\prod_{k_1,k_2} e^{-\frac{i \gamma a^2}{2} \sigma_{ij}b^{k_1} b^{k_2} Z^{k_1}_{i} Z^{k_2}_{j}}\prod_{i}\prod_{k_1\neq k_2} e^{-\frac{i \gamma a^2}{4} \sigma_{ii}b^{k_1} b^{k_2} Z^{k_1}_{i} Z^{k_2}_{i}} \prod_{i}\prod_{k} e^{\frac{i \gamma a}{2} \tilde{\sigma}_{i} b^{k} Z^{k}_{i}},
	\end{aligned}
\end{equation}
where $\gamma\in[0,2\pi]$. We can separate the products of exponentials because Pauli-Z strings commute with each other. \Cref{eq:Cz1} contains both singular and coupling Pauli-Z strings---the corresponding digital quantum circuit is shown in \Cref{costcirc}(a)(b), respectively.
As the covariance matrix $\Sigma$ is dense, coupling Pauli-Z strings in $U(C, \gamma)$ will contain interactions between all pairs of qubits. Thus, a dense quantum circuit can be constructed with $\mathcal{O}({N})$ depth, where $N$ is the number of qubits~\cite{PhysRevResearch.6.013254, PhysRevA.109.042413}.

\begin{figure}[h]
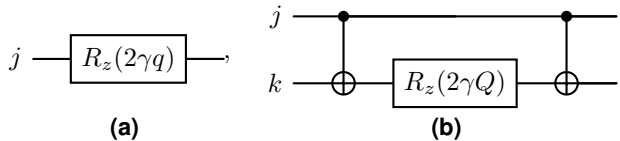

	\centering
	\begin{tabular}{c c}
		\usebox\boxE & \usebox\boxF \\
		\textbf{\fontfamily{phv}\selectfont (a)} & \textbf{\fontfamily{phv}\selectfont (b)}
	\end{tabular}
	\caption[Cost operators circuits]{(a) Cost operator circuit design of $e^{-i\gamma q Z_j}$. (b) Cost operator circuit design of $e^{-i\gamma Q Z_jZ_k}$. } \label{costcirc}
\end{figure}

\subsection{Hard mixing operator based on qubit excitations}\label{sec:mix}
In this section, we design a hard mixing operator based on the qubit excitation operators to conserve the budget constraint in the QAOA optimisation. 
There are two types of mixing operators in the current quantum portfolio solvers: \textit{Hard mixing operators} and \textit{soft mixing operators}. Hard mixing operators, unlike soft mixing operators, obey the problem constraints. 
In \Cref{appendix:scaling}, we show the gap between the constrained and unconstrained feasible region size scales exponentially. This gap may imply that soft mixing operators can induce significant measurement overheads to optimisation. Hodson \textit{et al}. designed the \textit{parity mixer}, a hard mixing operator, alternating the application of Pauli-XY mixers to the odd and even spin subsets, representing short and long positions, respectively~\cite{hodsonPortfolioRebalancingExperiments2019}. 
Their simulations show that the parity mixer outperforms the standard QAOA soft mixing operator in portfolio rebalancing.
Chen \textit{et al}. extended the design of the parity mixer as the \textit{ring mixer} by adding Pauli-XYY mixers to solve a quasi-binary encoded problem~\cite{chenQuantumPortfolioOptimization2023}.
Our algorithm includes a hard mixing operator design constructed using two- and three-qubit excitation operators and the induced quantum bridge phenomenon introduced in \Cref{app:three-qubitex,app:two-qubitex}.

In principle, within the binary encoding of the DGMVP model, a combination of two binary arithmetic operations can reach all constrained $\boldsymbol{w}$.
The first operation, \textit{exchange}, swaps excited qubits representing the variables with the same binary coefficient, such as $z_t^k$ and $z_{t'}^k$, where $z_{t}^k$ and $z_{t'}^{k}$ are the $k$th qubits in the $t$ and $t'$ asset blocks of the binary block encoding.
This conserves the total number of excited qubits. The exchange operation includes two transformations:
\newcommand{\ztk}{\raisebox{-0.2ex}{${}_{z_{t}^k}$}}
\newcommand{\ztm}{\raisebox{-0.2ex}{${}_{z_{t'}^k}$}}
\newcommand{\ztkt}{\raisebox{-0.2ex}{$_{z_{t}^{k+1}}$}}
\newcommand{\ztmt}{\raisebox{-0.2ex}{$_{z_{t''}^{k}}$}}
\begin{equation}\label{mixereq2}
	\renewcommand{\arraystretch}{1.5}
	\left\{\begin{array}{l}
		\ (\underset{\ztk}{1}, \, \underset{\ztm}{0} )\mapsto(\underset{\ztk}{0}, \, \underset{\ztm}{1})\\
		\ (\underset{\ztk}{0}, \, \underset{\ztm}{1})\mapsto(\underset{\ztk}{1}, \, \underset{\ztm}{0})
		\end{array}\right..
\end{equation}

The other operation is the \textit{carry-borrow} operation and includes transformations as follows:
\begin{equation}\label{mixereq1}
	\renewcommand{\arraystretch}{1.5}
	\left\{\begin{array}{l}
		\ (\underset{\ztkt}{0}\!\!,\, \underset{\ztm}{1},\underset{\ztmt}{1}) \mapsto (\underset{\ztkt}{1}\!\!, \, \underset{\ztm}{0}, \underset{\ztmt}{0}) \quad \textit{carry} \\
		\ (\underset{\ztkt}{1}\!\!, \, \underset{\ztm}{0}, \underset{\ztmt}{0}) \mapsto (\underset{\ztkt}{0}\!\!, \, \underset{\ztm}{1},\underset{\ztmt}{1}) \quad \textit{borrow}
		\end{array}\right.,
\end{equation}
where assets label $t'\neq t''$. The carry in the quantum setup corresponds to exciting a qubit at binary position $k+1$ and de-exciting two other qubits at binary position $k$, and the borrowing process is the reverse direction. 
\renewcommand{\arraystretch}{1}

From~\cite{yordanovEfficientQuantumCircuits2020}, a two-qubit excitation operator can perform the operation in~\Cref{mixereq2} and is defined as
\begin{equation}\label{mixerOp1}
	\begin{aligned}
		 & S_{tt'}^k \coloneqq Q^{k\dagger }_{t'}Q^k_{t} - Q^k_{t'}Q^{k\dagger}_{t},
	\end{aligned}
\end{equation}
where 
\begin{equation}\label{qubitcreaanni}
	\begin{aligned}
		Q^{k\dagger }_{t} = \frac{1}{2}(X^k_t-i Y^k_t), \quad Q^k_t = \frac{1}{2}(X^k_t+i Y^k_t),
	\end{aligned}
\end{equation}
and $X^k_t$ and $Y^k_t$ are the Pauli-X and Pauli-Y operators acting on $k$th qubit of the $t$th asset.
The operation in~\Cref{mixereq1} can be achieved by the three-qubit excitation operator
\begin{equation}\label{mixerOp2}
	\begin{aligned}
		 & P_{tt't''}^{k} \coloneqq Q^{{k+1}\dagger}_{t} Q^{k}_{t'} Q^{k}_{t''}- Q^{{k+1}}_{t} Q^{k\dagger}_{t'} Q^{k\dagger}_{t''}. \\
	\end{aligned}
\end{equation}
where assets label $t'\neq t''$. 
The two parameterised evolution unitaries are
\begin{equation}\label{mixerOpExps}
	\begin{aligned}
		& e^{\beta  S^{k}_{tt'}} = \cos^2\frac{\beta}{2}\mathds{1} + \sin^2\frac{\beta}{2}Z^k_t Z^k_{t'} +  \sin\beta  {S}^{k}_{tt'} \\  
            & e^{\beta  P^{k}_{tt't''}}= \frac{1}{4}\left(3+\cbb{}\right)\mathds{1} +\frac{1}{2}\sbf{2}{2}\left(Z^{k+1}_{t}Z^{k}_{t'} +Z^{k+1}_{t}Z^{k}_{t''}- Z^{k}_{t'}Z^{k}_{t''}\right) + \sbb{}  {P}^{k}_{tt't''},
	\end{aligned}
\end{equation}
where $\beta \in [0,2\pi]$ and the corresponding quantum circuit design is shown in~\Cref{fig:digiEABEAAB}.                         
Thus, $e^{\beta  S^{k}_{tt'}}$ has support on $\mathds{1}$, $Z^k_t Z^k_{t'}$, and the two-qubit excitation operators ${S}^{k}_{tt'}$. Additionally, $e^{\beta  P^{k}_{tt't''}}$ has support on $\mathds{1}$, $Z^{k+1}_{t}Z^{k}_{t'}, \ Z^{k+1}_{t}Z^{k}_{t''}$, $Z^{k}_{t'}Z^{k}_{t''}$, and the three-qubit excitation operator ${P}^{k}_{tt't''}$. 
The $\mathds{1}$ leaves the state invariant. The Pauli-Z strings generate phases between computational basis states. The qubit excitation operators, ${S}^{k}_{tt'}$ and ${P}^{k}_{tt't''}$, perform the binary arithmetic operations in~\Cref{mixereq2,mixereq1}. The support on the qubit excitation operators changes the occupation of computational basis states while still obeying the constraints in~\Cref{qdp:2a,qdp:2b,qdp:2c}. The parameter $\beta$ controls the degree of superposition, performing optimisation improves the superposition to a better approximation.
In \Cref{fig:SPop} (a) and (b), we present two graphs that represent the support of the two- and three-qubit excitations, respectively.

Furthermore, if we compose the two- and three-qubit excitation operators as follows:
\begin{equation}\label{eq:robot}
	\begin{aligned}
		 & e^{\beta  S^{k'}_{tt'}} e^{\beta  P^{k'}_{ttt'} } e^{\beta  S^{k'}_{tt'}},
	\end{aligned}
\end{equation}
it will also include the support of qubit excitation operators of $e^{\beta  P^{k'}_{tt't} }$ as shown from \Cref{fig:SPop} (c) to (d) due to a phenomenon we denote as the \textit{quantum bridge} outlined in \Cref{app:three-qubitex}. 
Thus, by stacking qubit excitations, the composition
\begin{equation}\label{eq:robot2}
	\begin{aligned}
		& \left(\bigotimes_{i =k'k''} e^{\beta  S^{i}_{tt'}}\right) \Biggl(e^{\beta  P^{k'}_{ttt'} }e^{\beta  P^{k''}_{ttt'} } \Biggl)\left(\bigotimes_{i = k',k''} e^{\beta  S^{i}_{tt'}}\right)        \\
		& =  \left( e^{\beta  S^{k'}_{tt'}} e^{\beta  P^{k'}_{ttt'} } e^{\beta  S^{k'}_{tt'}} \right) \left( e^{\beta  S^{k''}_{tt'}} e^{\beta  P^{k''}_{ttt'} } e^{\beta  S^{k''}_{tt'}} \right) ,
	\end{aligned}
\end{equation}
has support on all two- and three-qubit excitation operators between qubits sub-systems $z_t^{k'}, \ z_{t'}^{k'}, \ z_{t}^{k''},$ and $z_{t'}^{k''}$---\Cref{eq:robot2} is derived in~\Cref{app:qbm}. 

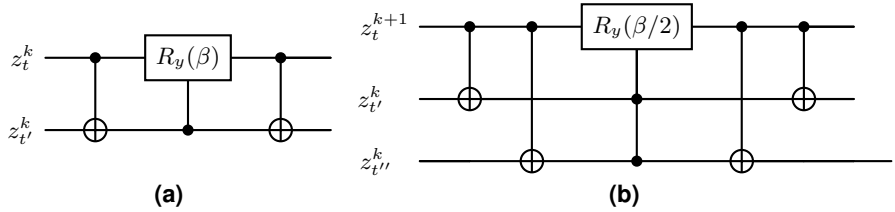
\begin{figure}[h]
	\centering
	\begin{tabular}{c c}
		\begin{quantikz}
			\lstick{${z^k_t}$}  & \ctrl{1} & \gate{ R_y( \beta ) }&\ctrl{1}& \qw\\
			\lstick{${z^k_{t'}}$} & \targ{} & \ctrl{-1} &\targ{}& \qw
		\end{quantikz} & 	\begin{quantikz}
			\lstick{${z^{k+1}_t}$}  & \ctrl{1} & \ctrl{2} & \gate{ R_y( \beta /2) } & \ctrl{2} &\ctrl{1} & \qw\\
			\lstick{$z^{k}_{t'} \quad $} 	  & \targ{}  & \qw	& \ctrl{-1} 	& \qw &\targ{} &\qw\\
			\lstick{$z^{k}_{t''} \ \ $} 	  & \qw      &\targ{}	& \ctrl{-2}& \targ{}& \qw &\qw &
		\end{quantikz}\\
		\textbf{\fontfamily{phv}\selectfont (a)} & \textbf{\fontfamily{phv}\selectfont (b)}
	\end{tabular}
	\caption{Quantum circuit design for (a) $e^{\beta  S^{k}_{tt'}}$ and (b) $e^{\beta  P^{k}_{tt't''}}$.} \label{fig:digiEABEAAB}
\end{figure}

\begin{figure}[h]
	\begin{center}
		\includegraphics{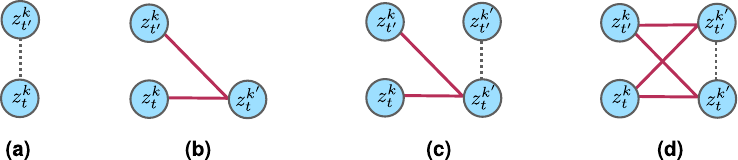}
		\caption{Graphs representing the composition of qubit excitations. The nodes represent the qubits acted upon non-trivially. The edges represent support of the resultant unitary on the generators of two- (dash) and three-qubit excitations (solid). (a) The unitary $e^{\beta  S^{k}_{tt'}}$ generated by the two-qubit excitation. (b) The unitary $e^{\beta  P^{k}_{ttt'}}$ generated by the three-qubit excitation. (c) The union of the support of the unitaries $e^{\beta  S^{k'}_{tt'}}, \ e^{\beta  P^{k'}_{ttt'} },$ and $ e^{\beta  S^{k'}_{tt'}}$ generated by two- and three-qubit excitations. (d) The composition $e^{\beta  S^{k'}_{tt'}} e^{\beta  P^{k'}_{ttt'} } e^{\beta  S^{k'}_{tt'}}$ also includes support on $P^{k'}_{t't't}$ due to the quantum bridge---see~\Cref{app:three-qubitex}.
		}\label{fig:SPop}
	\end{center}
\end{figure}

By extending this stacked structure consisting of four layers, we can construct a circuit of constant depth with support on all the generators of two- and three-qubit excitations that correspond to valid binary arithmetic operations between assets $t$ and $t'$:
\begin{equation}\label{eq:QE23}
	\begin{aligned}
		 H_{tt'}(\beta) =  \tilde{S}_{tt'}(\beta) \tilde{P}^e_{tt'}(\beta) \tilde{P}^o_{tt'}(\beta) \tilde{S}_{tt'}(\beta), 
	\end{aligned}
\end{equation}
where
\begin{equation}\label{eq:QE2}
	\begin{aligned}
		 & \tilde{S}_{tt'}(\beta) = \bigotimes_{k=1}^l e^{-i\beta S^k_{tt'}},      \\
		 & \tilde{P}^e_{tt'}(\beta) = \bigotimes_{k_1 \in \mathcal{K}_1} e^{-i \beta P^{k_1}_{ttt'}}, \ \mathcal{K}_1 = \{(2c_1)\bmod l, \ c_1 = 1,2,\dots,\floor*{l/2}\}, \\
		 & \tilde{P}^o_{tt'}(\beta) = \bigotimes_{k_2 \in \mathcal{K}_2 } e^{-i \beta P^{k_1}_{ttt'}}, \ \mathcal{K}_2 = \{(2c_2-1)\bmod l, \ c_2 = 1,2,\dots,\ceil*{l/2}\},
	\end{aligned}
\end{equation}
see \Cref{app:qbm} for details. The graph of the support of $H_{tt'}(\beta)$ on the generators of two- and three-qubit excitations is as shown in \Cref{fig:U_{AB}}. 
\begin{figure}[h]
	\begin{centering}
		\includegraphics{./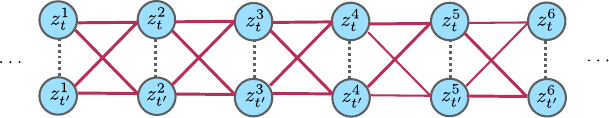}
		\caption{A graph representing the composition $H_{tt'}(\beta)$ of qubit excitations. The nodes represent the qubits acted upon non-trivially. The edges represent the support of the resultant unitary on the generators of two- (dash) and three-qubit excitations (solid). 
  } \label{fig:U_{AB}}
	\end{centering}
\end{figure}

We can then define a hard mixing operator as
\begin{equation}\label{mix:def}
	\begin{aligned}
		\hat{M}_L(\beta) = \prod_{d=1}^{L}\prod_{tt'} H_{tt'} (\beta),  \quad |t-t'|\bmod n = d, \ 1 \leq L\leq\floor*{n/2},
	\end{aligned}
\end{equation}
where $L$ represents the maximum distance between assets. Assuming all-to-all qubit connectivity the depth complexity is $\mathcal{O}(L)$. 
If we set $L=1$, a nearest-neighbour hard mixing operator is 
\begin{equation}\label{mix:nn}
	\begin{aligned}
		\hat{M}_1(\beta) =  \prod_{tt'} H_{tt'} (\beta), \quad |t-t'| \bmod n =1.
	\end{aligned}
\end{equation} 
When choosing $L=\floor*{n/2}$, the mixing operator will contain qubit excitations between all pairs of assets. In the following sections, unless stated, we use $\hat{M}_1(\beta)$ as the mixing operator due to its better performance in the optimisation (see~\Cref{app:simumix}) and the minimal use of quantum gates. 
Further, by repeatedly applying $\hat{M}_L(\beta)$ interleaved with cost operators in the QAOA ansatz, we generate \textit{long-exchange}, \textit{-carry} and \textit{-borrow} operations between assets through transporting excitation from neighbour to neighbour. 

\section{Parameterised circuit optimisation} \label{sec:Simulation}
This section benchmarks two classical parameter optimisers, Dual Annealing (DA) and Constrained Optimisation by Linear Approximation (COBYLA)~\cite{powellDirectSearchOptimization1994, xiangGeneralizedSimulatedAnnealing1997}, and two layerwise optimisation methods, \textit{frozen-layerwise} and \textit{unfrozen-layerwise} for our QAOA ansatz.

Optimisation of our ansatz is governed primarily by two hyperparameters which we will consider in turn below: The number of shots $N_m$ used to calculate expectation values and the maximum number of expectation values that can be estimated by the optimiser $\mathcal{I}$.

As with most QAOA applications, our simulations use the estimation of an expectation value $\left< C \right>_{N_m}$ using $N_m$ (optimisation) shots as the objective function. We observe in~\Cref{fig:cbcaOpt} that insufficient shots will lead to a sizeable stochastic noise and an unstable optimisation.  We use $N_M\ge N_m$ (post-optimisation) shots to obtain a more precise estimation of the expectation value of a post-optimised quantum circuit. A further discussion about the effect of different $N_m$ on optimisation is given in \Cref{sec:paraopt}. 

From our simulations, we also find classical optimisers sometimes fail to identify convergence (see left column of~\Cref{fig:cbcaOpt}) due to the poor loss landscape of the quantum circuit and stochastic measurement noise. Thus, employing an $\mathcal{I}$ is important to limit the overuse of resources. Finally, we denote the total number of objective functions assessed during the optimisation as $N_f$---which is determined by $N_m$ and $\mathcal{I}$. 

To evaluate the efficacy of a QAOA after optimisation, a common metric is the mean value approximation ratio:
\begin{equation}\label{eq:appr_mean}
	\begin{aligned}
		\alpha_{\mean} = \frac{\langle C\rangle_{N_M} - f_{\min} }{f_{\max} - f_{\min}},
	\end{aligned}
\end{equation}
where $\langle C \rangle_{N_M}$ is defined in \Cref{appFp}, $f_{\min}$ and $f_{\max}$ are the global minimal and maximal values of the cost function [~\Cref{qdp}], respectively. $\alpha_{\mean}$ is normalised within the range $[0,1]$. Unlike the approximation ratio in the maxcut problem, our approximation ratio is better when it's lower.
However, the portfolio strategy with the global optimum that a solver can find is more useful for the DGMVP model, so we also introduce the minimal value approximation ratio:
\begin{equation}\label{eq:appr_min}
	\begin{aligned}
		\alpha_{\min} = \frac{ \min_{i\in\left[1,N_M\right]}\{ \langle \bm{w}_i|C|\bm{w}_i \rangle \}- f_{\min} }{f_{\max} - f_{\min}},
	\end{aligned}
\end{equation}
where the $\bm{w}_i$ is the $i$th measured weight vector.

\subsection{Classical optimisers}\label{sec:paraopt}
DA is a stochastic global optimisation method that couples a local search with simulated annealing. COBYLA is a derivative-free local optimisation method to approximate constrained optimisation problems with linear programming problems iteratively. Benchmarks demonstrate that COBYLA can converge to lower energies faster than other local optimisers when optimising variational quantum algorithms in the absence of noise~\cite{pellow-jarmanComparisonVariousClassical2021,Lavrijsen2020}.

In \Cref{fig:cbcaOpt}, we use the COBYLA and DA to optimise one ansatz parameter for different numbers of measurement shots $N_m$ during the optimisation, fixing the DGMVP instance and keeping other ansatz parameters the same. The simulation demonstrates the difficulty of classical optimisers when optimising the loss landscape in the presence of stochastic noise~\cite{gokhaleMeasurementCostVariational2020}. This is particularly evident for COBYLA (see left-hand column of~\Cref{fig:cbcaOpt}): as the number of optimisation shots $N_m$ decreases, COBYLA starts to terminate the optimisation at a stochastically generated position as opposed to a local minimum. Although a sufficiently large $N_m$ can be helpful to the optimisation, it will also lead to excessive cost function accesses, diminishing the advantage of quantum algorithms.
Fortunately, we observe a better result for DA (see right-hand column of~\Cref{fig:cbcaOpt}): a stable performance in finding a point near the global minimum as $N_m$ is reduced with a constant number of expectation value estimations.
Notably, even when decreasing the optimisation shots drastically, as shown in the bottom-right panel of \Cref{fig:cbcaOpt}, DA still stops at a region close to the optimal parameter.
However, DA will fail to identify convergence at the end of optimisation due to stochastic noise, and an $\mathcal{I}$ should be set to avoid the over-optimisation cost. 

\begin{figure}[ht] 
	\centering
	\includegraphics{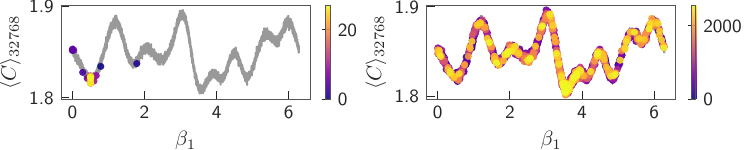}

	\vspace{0.2cm}

	\includegraphics{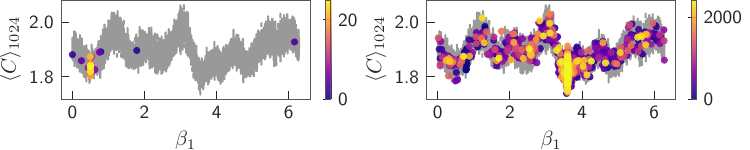}

	\vspace{0.2cm}

	\includegraphics{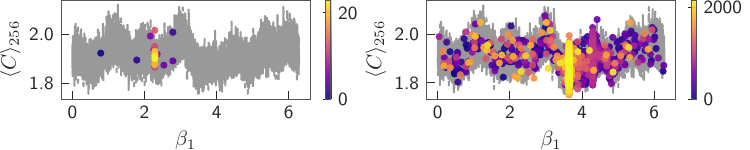} 

	\vspace{0.2cm}

	\includegraphics{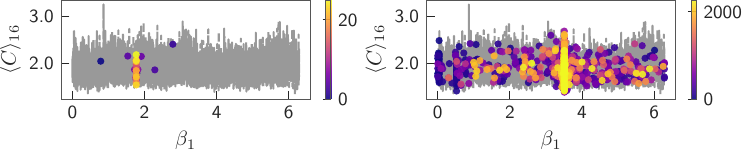}

	\caption[The optimisation simulation]{
        The performance of COBYLA (left column) and DA (right column) when optimising $\beta_1$ with other parameters fixed. These simulations use $n=4$ assets, an asset encoding binary block length of $l=4$, a QAOA circuit with p=10 layers and $\hat{M}_2(\beta)$ as the mixing operator, and $|\boldsymbol{w}\rangle_{mb}$ as the initial state.
		The loss landscape based on sampling from the cost function of $\beta_1$ with stochastic noise is plotted as grey lines. For DA, we use $\mathcal{I}=2000$ as the maximal number of expectation value estimations. The level of stochastic noise increases down the page. That is the number of shots utilised for sampling is $N_m$=$2^{15}$, $2^{10}$, $2^{8}$ and $2^{4}$ from the top row to the bottom. The overlayed coloured points indicate the values of the cost function sampled by the optimiser at each step during the optimisation. The colour bar on the right indicates the optimisation steps.
	} \label{fig:cbcaOpt}
\end{figure}

Furthermore, a set of simulations in~\Cref{app:ccoda} shows the advantages of using DA over COBYLA to optimise multilayered ansätze in the presence of stochastic measurement errors. In~\Cref{app:dafuther}, we also consider DA for a range of $\mathcal{I}$ and $N_m$ and determine a reasonable choice as $\mathcal{I}=2000$ and $N_m =16$, which we employ in~\Cref{sec:analysis}.

\subsection{Layerwise optimisation}\label{sec:layerwise}
Layerwise optimisation is becoming a popular optimisation method for the QAOA and indicates faster convergence in many simulations~\cite{skolikLayerwiseLearningQuantum2021,PhysRevResearch.4.033029,PhysRevA.104.L030401,PhysRevA.109.052406}. 
There are many variant versions of the layerwise optimisation methods, and they can be categorised as frozen-layerwise and unfrozen-layerwise for our ansatz.
For a fixed $p>1$, both methods start with optimising the $p'=1$ circuit with random initial parameters and store the optimised parameters as $\bm{\gamma}', \bm{\beta}'$. Then, we add subsequent layers by setting the previous layers' parameters as $\bm{\gamma}', \bm{\beta}'$ and initialising the new layer's parameters as zero (an identity layer) to avoid detrimenting the prior optimisation. On the one hand, in frozen-layerwise optimisation, we fix parameters from before the new layer is added and only optimise the parameters in the new layer. On the other hand, in unfrozen-layerwise optimisation, we optimise all of the layers' parameters together. For both methods, a well-optimised $p$ layer quantum circuit can then be iteratively constructed by adding a new layer and optimising. When using DA as the layerwise optimiser, we limit the number of expectation value estimations used during optimisation after adding each new layer to $\mathcal I'$. Thus, after adding a total of $p$ layers, the maximum number expectation value estimations is $\mathcal I=\mathcal I'p$.

In~\Cref{fig:layerwise}, we perform a comparison of the unfrozen-layerwise, the frozen-layerwise, and the fix ansatz approach optimisation methods with COBYLA and DA to a fixed DGMVP instance with a fixed number of measurement shots $N_m=1024$ during the optimisation. The simulation shows that unfrozen-layerwise optimisation, in general, outperforms frozen-layerwise in both $\alpha_{\mean}$ and $\alpha_{\min}$. 
The value of $\alpha_{\mean}$ after optimisation is roughly constant after $p=2$ for both layerwise methods with COBYLA and DA. In the COBYLA simulations [see \Cref{fig:layerwise}(a)], we observe a lower $\alpha_{\mean}$ for both layerwise methods compared to the fixed ansatz approach. 
Furthermore, for the DA simulations, the unfrozen-layerwise optimisation in \Cref{fig:layerwise}(c) uses the same number of expectation value estimations in total but achieves a better result than the fix ansatz approach for circuits with larger $p$. In \Cref{fig:layerwise}(b) and (d), the median of $\alpha_{\min}$ is pinned at zero for all methods but frozen-layerwise with DA at almost all $p$---at least 50\% of the time the true global minimum is found. Thus, we plot the 80\% percentile to indicate the tail of the distribution. When using COBYLA, we observe that the fix ansatz approach gives the tightest distribution for the larger $p$ circuits. Further, we find that the unfrozen-layerwise approach using DA returns the global minimum with a probability greater than 80\% for $p\geq3$ and outperforms all other methods presented.
Compared to optimising the $p$ layer circuit directly as in the fix ansatz approach, the unfrozen-layerwise optimisation also uses a shorter circuit until adding the last layer which reduces quantum resource usage.
The layerwise methods may also be more noise-robust by minimising the noise induced by finite coherence times during the early stages of optimisation. 
 
\begin{figure}[h]
	\centering
	\includegraphics{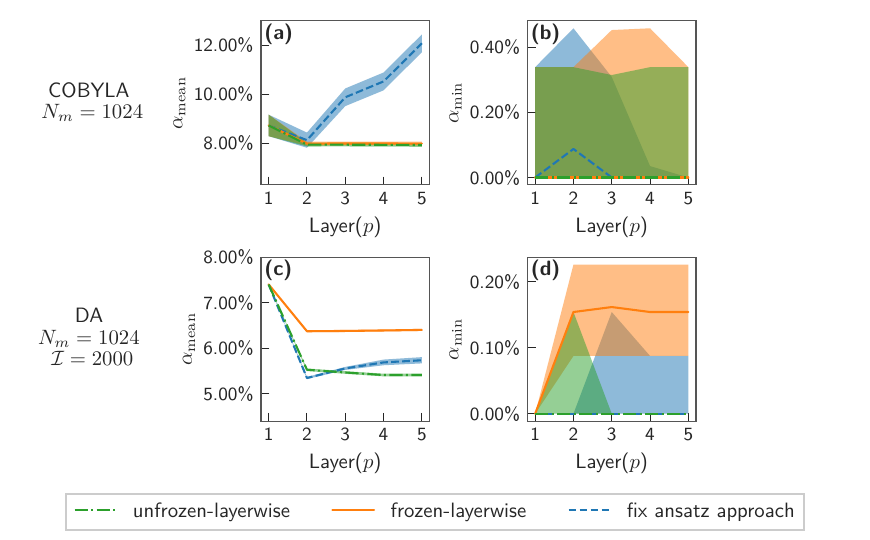}
	\caption[The layerwise optimisation]{A comparison of optimisation methods: (a)(b) use COBYLA as the optimiser with $N_m = 1024$; (c)(d) use DA as the optimiser with $\mathcal{I} = 2000$ and $ N_m = 1024$.
	The lines in (a) and (c) are the mean of $\alpha_{\mean}$ over stochastic simulations with 100 randomly generated initial ansatz parameters for $p=1$. The lines in (b) and (d) are the median of $\alpha_{\min}$. Shaded regions take percentiles 80\% of data. These simulations use the mixing operator $\hat{M}_1(\beta)$ and the initial state $|\boldsymbol{w}\rangle_{mb}$ with $n=l=4$. The final post-optimisation expectation value measurement shots of $N_M = 2^{16} = 65536$.
	}
	\label{fig:layerwise}
\end{figure}

\section{Scaling results} \label{sec:analysis}
To investigate the performance of our algorithm in solving the DGMVP model, we conduct stochastic simulations of models randomly selected from the financial market with varying asset numbers $n$ and binary block lengths $l$. 
However, the stochastic noise in estimating the final expectation value $\langle C\rangle_{N_M}$ of the post-optimisation ansatz will increase with problem size if we fix the number of shots $N_M$. Thus, to gain a precise description of the optimised ansatz, we rebuild the parameterised quantum circuit using a statevector representation and calculate the exact expectation value $\langle C\rangle$. This allows us to benchmark the scaling of our quantum algorithm's performance with a range of metrics independent of $N_M$: the approximation ratios within the lower $k$ percentile $\alpha^{k\%}_{\mean}$ ($\alpha^{100\%}_{\mean} = \alpha_{\mean}$); the minimal value approximation ratio $\alpha_{\min}$; the probability of measuring the minimal value $\mathrm{P}_{\min}$; the probability of measuring global minimum $\mathrm{P}_{gm}$ among stochastic simulations. The results are presented in \Cref{fig:anaylysis_scale_l_n_approx_mb,fig:anaylysis_scale_l_n_approx_ws,fig:anaylysis_scale_l_n}---for simulation details see~\Cref{app:sim}. 

\begin{figure}[h]
	\centering
	\includegraphics{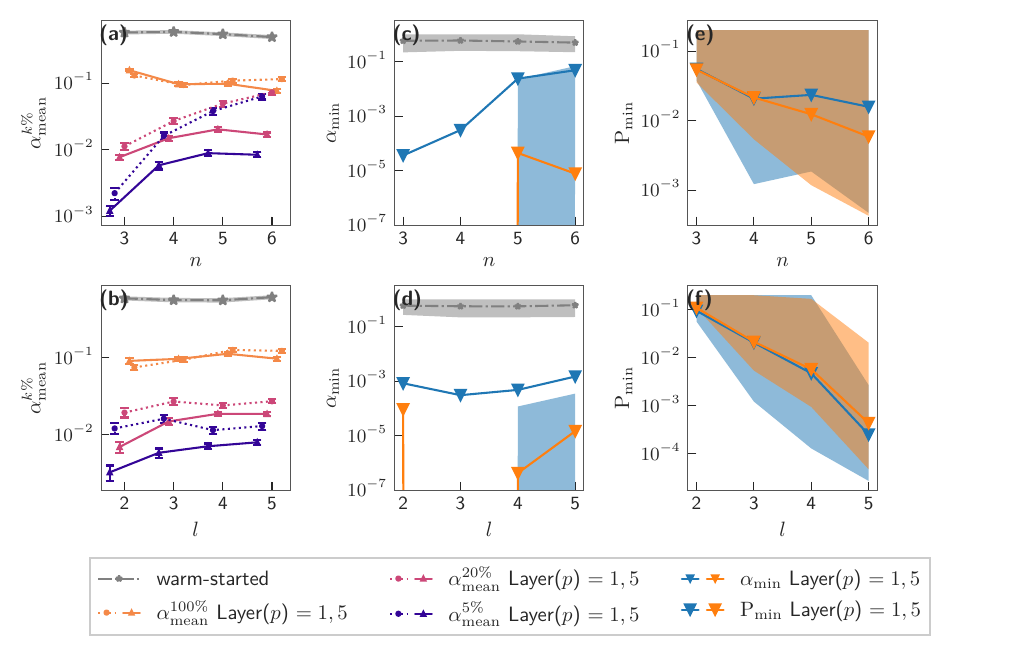}
	\caption[The scaling analysis in n and l approximate ratio]{The $\alpha^{k\%}_{\mean}$(a)(b), $\alpha_{\min}$(c)(d) and $\mathrm{P}_{\min}$(e)(f) of the stochastic optimisation over 100 randomly generated DGMVP instances as a function of $n$ with $l=3$ (the top three panels) and $l$ with $n=4$ (the bottom three panels). The simulations use the initial state $|\bm{w}\rangle_{mb}$ and the mixing operator $\hat{M}_1(\beta)$, and the initial ansatz parameters are randomly generated. The grey lines are the approximation ratio of maxbias portfolios (also considered initial states). The coloured lines are the average data of $\alpha^{k\%}_{\mean}$, $\alpha_{\min}$ and $\mathrm{P}_{\min}$ of the post-optimised ansatz. The shade regions and error bars in (a)(b) represent the standard error of the data. 
  The shaded regions in (c)(d) are 25\%-75\% percentiles of the data, and the shaded regions in (e)(f) are 50\%-100\% percentiles of the data. The missing data points in (c)(d) for $\alpha_{\min}$ correspond to values equal to $0$. 
 The classical optimiser is DA with $\mathcal{I} = 2000$ and $N_m = 16$, and the optimisation method is unfrozen-layerwise optimisation.}
\label{fig:anaylysis_scale_l_n_approx_mb}
\end{figure}

In \Cref{fig:anaylysis_scale_l_n_approx_mb,fig:anaylysis_scale_l_n_approx_ws,fig:anaylysis_scale_l_n}, we compare classical methods to our quantum algorithm. 
We use the maxbias and the warm-started portfolio as both classical strategies and our initial states (see~\Cref{sec:initialstate}). For the warm-started portfolio, we post-select out the strategies that are the exact global minimum solutions in order to compare the level of improvement. 
For these two classical methods, we observe that the warm-started method outputs better results by orders of magnitude compared to the maxbias. For our quantum algorithm, we use DA as the optimiser and unfrozen-layerwise optimisation method with $\hat{M}_1(\beta)$. As the computational cost of optimising for the hyperparameters, $N_m$ and $\mathcal{I}$, for all $n$ and $l$ configurations is high, we fix $N_m = 16$ and $\mathcal{I} =2000$ as motivated by~\Cref{app:dafuther}. We justify fixing these values for all $n$ and $l$ as follows: Yanakiev \textit{et al}. \cite{PhysRevA.109.032420} demonstrates in appendix D that the number of shots required to estimate the approximation ratio to a fixed precision is independent of $n$ and $l$. Further, we assume the number of optimisation steps to reach a given accuracy depends only on the number of parameters and hence the number of layers. Thus, we choose to fix the total number of shots used for optimisation for each layer.

In \Cref{fig:anaylysis_scale_l_n_approx_mb}(a)(b), we observe our quantum algorithm achieves a significantly lower $\alpha^{k\%}_{\mean}$ for all tested $k$ and $\alpha_{\min}$ compared to the maxbias method as a function of $n$ and $l$. Additionally, we find $\alpha^{k\%}_{\mean}$ and $\alpha_{\min}$ improve with increasing $p$ in the maxbias simulations. In particular, for the small $l$ and $n$ configurations, the post-optimised quantum circuit can measure the DGMVP solution with a relatively high probability---see~\Cref{fig:anaylysis_scale_l_n_approx_mb}(c)(d). We observe $\alpha^{k\%}_{\mean}$ appears to saturate [see ~\Cref{fig:anaylysis_scale_l_n_approx_mb}(a)(b)] and $\mathrm{P}_{\min}$ decreases [see ~\Cref{fig:anaylysis_scale_l_n_approx_mb}(e)(f)] as a function of $n$ or $l$. This could be because we use fixed values for $N_m$ and $\mathcal{I}$ for all the $n$ and $l$ configurations. 

When using the warm-started initial state, we find $\alpha^{100\%}_{\mean}$ is worse than the classical method (initial state) due to stochastic noise at finite $N_m$ and $\mathcal{I}$ during optimisation---see~\Cref{fig:anaylysis_scale_l_n_approx_ws}(a)(b). However, we find that while the optimisation increases the mean, it also skews the distribution towards the global minimum. Thus, the $\alpha^{5\%}_{\mean}$ and $\alpha^{20\%}_{\mean}$ curves lie close to the warm-started approximation ratio. Additionally, increasing $p$ appears to increase $\alpha^{k\%}_{\mean}$. This is likely due to an exacerbation of the stochastic noise from additional states being mixed into the statevector. The post-optimisation $\alpha_{\min}$ for the warm-started initial state [see~\Cref{fig:anaylysis_scale_l_n_approx_ws}(c)(d)] is smaller than for the warm-started portfolio. The corresponding $\mathrm{P}_{\min}$ [see~\Cref{fig:anaylysis_scale_l_n_approx_ws}(e)(f)] is almost constant as a function of $n$ and $l$ with fixed $N_m$ and $\mathcal{I}$ throughout the optimisation.

\begin{figure}[h]
 \centering
 \includegraphics{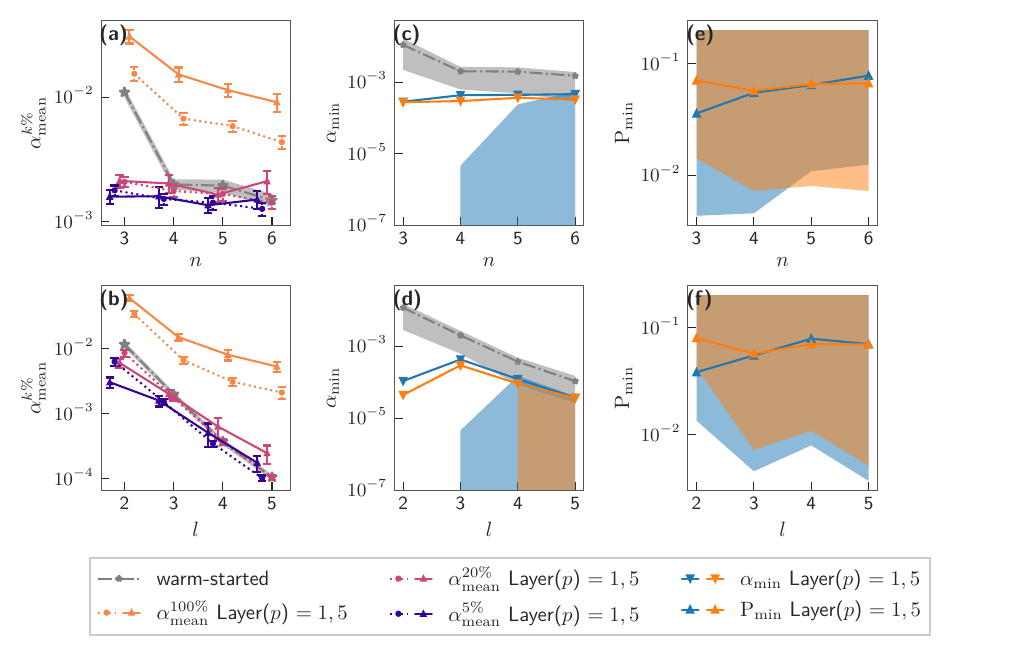}
 \caption[The scaling analysis in n and l approximate ratio]{The $\alpha^{k\%}_{\mean}$(a)(b), $\alpha_{\min}$(c)(d) and $\mathrm{P}_{\min}$(e)(f) of the stochastic optimisation over 100 randomly generated DGMVP instances as a function of $n$ with $l=3$ (the top three panels) and $l$ with $n=4$ (the bottom three panels). The simulations use the initial state $|\bm{w}\rangle_{ws}$ and the mixing operator $\hat{M}_1(\beta)$, and the initial ansatz parameters are randomly generated. The grey lines are the approximation ratio of maxbias portfolios (also considered initial states). The coloured lines are the average data of $\alpha^{k\%}_{\mean}$, $\alpha_{\min}$ and $\mathrm{P}_{\min}$ of the post-optimised ansatz. The shade regions and error bars in (a)(b) represent the standard error of the data. 
  The shaded regions in (c)(d) are 25\%-75\% percentiles of the data, and the shaded regions in (e)(f) are 50\%-100\% percentiles of the data.
 The classical optimiser is DA with $\mathcal{I} = 2000$ and $N_m = 16$, and the optimisation method is unfrozen-layerwise optimisation.}
 \label{fig:anaylysis_scale_l_n_approx_ws}
\end{figure}

A global minimum is more important for investment because it represents the lowest-risk portfolio. Thus, in~\Cref{fig:anaylysis_scale_l_n}, we plot the inverse of the probability of measuring the global minimum $1/\mathrm{P}_{gm}$. The number of measurements we require to find the true global minimum with some fixed success probability $\mathrm{P}_{s}$ is $N_M=\log(1-\mathrm{P}_{s})/\log(1-\mathrm{P}_{gm})$. We benchmark this against the probability of measuring the global minimum when drawing uniformly random samples from both the constrained and unconstrained feasible region.
The figure shows that the optimised quantum circuit has a significantly lower $1/\mathrm{P}_{gm}$ than the constrained feasible region size, meaning the circuit can find the DGMVP solution with much higher probability for a single shot compared to a classical constrained random sampling method. 

We also fit the inverse of the mean of $\mathrm{P}_{gm}$ (the crosses in~\Cref{fig:anaylysis_scale_l_n}) with the function $ 1/\bar{\mathrm{P}}_{gm}  = a \cdot B(n,l)^b$ where $B(n,l)$ is chosen as the constrained feasible region size (describe in~\Cref{appendix:scaling}). We find $b= {0.76 \pm 0.30}$ for the warm-started initial state, and $b = {0.51 \pm 0.09}$ for the maxbias initial state---see \Cref{fig:anaylysis_scale_l_n}(a). The ratio of the exponents of $1/\bar{\mathrm{P}}_{gm}$ for the constrained classical sampling method to our quantum algorithm with warm-started and maxbias initial states are $1/0.76 \approx 1.32$ and $1/ 0.51 \approx 1.96$, respectively. This indicates our quantum algorithm has a more favourable power-law scaling with $n$.
Similarly, we find $b={0.66 \pm 0.01}$ and $b={1.11 \pm 0.05}$ for the warm-started and maxbias initial states, respectively---see \Cref{fig:anaylysis_scale_l_n}(b). Once again, we can consider the ratio of the exponents: $1/0.66\approx 1.52$ and $1/1.11 \approx 0.90$ for the warm-started and maxbias initial states, respectively. Therefore, the scaling with $l$ of our quantum algorithm appears to be more favourable than constrained classical sampling when using the warm-started initial state. The scalings in~\Cref{fig:anaylysis_scale_l_n} indicate our quantum algorithm utilising the warm-started initial state may asymptotically obtain a reduction in the number of samples required to find the DGMVP solution over classical methods.

\begin{figure}[h]
	\centering
	\includegraphics{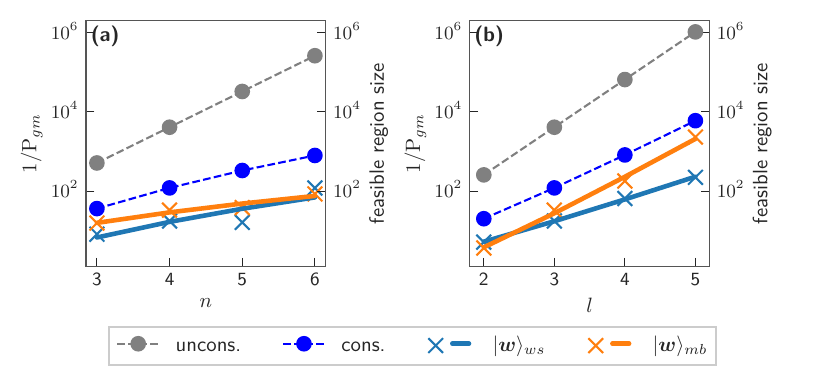}
	\caption[The scaling analysis in n and l]{
	 The inverse of the probability to measure the DGMVP solution ($1/\mathrm{P}_{gm}$) and the feasible region scaling of the maxbias $|\boldsymbol{w}\rangle_{mb}$ and warm-started $|\boldsymbol{w}\rangle_{ws}$ initial states as a function of (a) $n$ with $l=3$, (b) $l$ with $n=4$. The maxbias optimisation data is the same as in \Cref{fig:anaylysis_scale_l_n_approx_mb}, and the warm-started optimisation data is the same as in \Cref{fig:anaylysis_scale_l_n_approx_ws}.
    In both panels, the left vertical axes are the $1/\mathrm{P}_{gm}$ for the crosses and solid lines. Each cross is the average of the measured data, and the solid lines with the same colour fit the cross points with a function of $ 1/\bar{\mathrm{P}}_{gm} = a \cdot B(n,l)^b$ where $B(n,l)$ is the constraint feasible region size function. 
    The right vertical axes denote the feasible region size as in \Cref{Eq:Assign3} for the grey (unconstraint feasible region) and the blue (constraint feasible region) dash line and round solid dots. 
    }
	\label{fig:anaylysis_scale_l_n}
\end{figure}

\section{Quantum noise} \label{sec:noise}
This section compares the performance of our algorithm in the presence of thermal-relaxation noise with and without post-selection during optimisation. We utilise Qiskit's thermal-relaxation noise model parameterised by an ensemble of $T_1$ and $T_2$ times drawn from normal distributions such that $T_2\le 2T_1$ with $T_1\sim \mathcal{N}(50\mu s, 10\mu s)$ and $T_2\sim \mathcal{N}(70\mu s, 10\mu s)$~\cite{qiskit2024}.
To evaluate the influence of the quantum noise on our algorithm, we compare two methods: First, we use measurements on a simulated noisy quantum processor to estimate the expected value directly during the optimisation. Second, we use a post-selection method to enforce the constraint in~\Cref{qdp:2a} when estimating the expectation value during the optimisation.

Noise-induced errors can invalidate the constraint in~\Cref{qdp:2a}, which will increase the difficulty of optimisation when we do not apply post-selection. However, even when we apply post-selection, optimisation will still be hampered by noise-induced errors that leave the constraint in~\Cref{qdp:2a} satisfied. Further, the application of post-selection reduces the number of shots that can be utilised for optimisation, which either increases the stochastic noise or the runtime. These effects compound to deteriorate the quality of the optimisation as shown in \Cref{fig:noise}(a)(b). Finally, the average post-selection probability for both methods during the optimisation is presented in \Cref{fig:noise}(c)(d), demonstrating roughly an order of magnitude increase in the number of shots required to maintain a fixed level of stochastic noise.

These results corroborate theoretical~\cite{stilckfrancaLimitationsOptimizationAlgorithms2021,PRXQuantum.4.010309} and numerical~\cite{dalton2022variational,PhysRevA.109.042413,PhysRevA.109.032420} results, suggesting only limited quantum advantage can be obtained by utilising variational quantum algorithms unless quantum noise levels are decreased by orders of magnitude.
In a similar manner to stochastic noise, in~\Cref{fig:noise}(a), increasing measurement shots $N_m$ can reduce the influence of quantum noise. In particular, the filtered method shows a small improvement in \Cref{fig:noise}(a) compared to the unfiltered one as $N_m$ is increased.
However, when increasing $p$, the gap of $\alpha_{\mean}$ closes and the probability of post-selection $\mathrm{P}_{ps}$ dropping sharply [see \Cref{fig:noise}(b) and (d)].

\begin{figure}[h]
	\centering
	\includegraphics{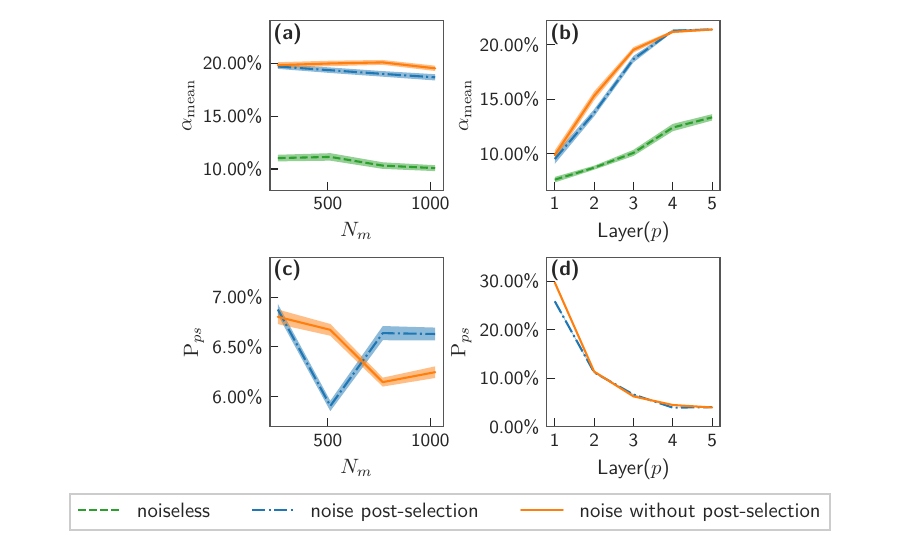}
	\caption[The simulation of noise.]{Performance of our QAOA in the presence of thermal relaxation noise with and without post-selection during the optimisation with optimiser COBYLA and optimisation method fix ansatz approach. 
 We present both $\alpha_{\mean}$ [(a) and (b)] and the average probability of post-selection during parameter optimisation $\mathrm{P}_{ps}$ [(c) and (d)] as functions of $N_m$ with ansatz layer $p=3$ [(a) and (c)] and layer $p$ with $N_m=1024$ [(b) and (d)]. The shaded regions in (a), (b), and (d) are the 25\% to 75\% quantile, and the shaded region in (c) is the 10\% to 90\% percentile of the data. These simulations use the mixing operator $\hat{M}_1(\beta)$ and the initial state $|\boldsymbol{w}\rangle_{mb}$ with $n=l=4$. The final expectation value measurement shots $N_M = 2^{16} = 65536$. The results are a summarisation of over 100 simulations with random initial ansatz parameters. 
	}
	\label{fig:noise}
\end{figure}

\section{Conclusion} \label{sec:conclusion}
In this article, we have provided an end-to-end QAOA method to solve the DGMVP model. Further, we analysed the advantages and issues with the QAOA method for financial applications. Based on the DGMVP model, we designed the max-biased, ranked warm-started, approximated equal-weighted, and random-weighted initial states. The simulations show the max-biased and warm-started initial states outperform the others. We mapped the DGMVP model to a Hamiltonian $C$ and designed corresponding cost operators $U(C, \gamma)$ with depth linear in the number of qubits. We designed a new hard mixing operator $\hat{M}_L(\beta)$ utilising two- and three-qubit excitations with efficient use of quantum gates using a phenomenon we call the quantum bridge. We numerically demonstrate that lower coupling distances between assets in the mixing operator makes the ansatz more readily optimisable.

Through comparative simulations, we found using DA with the limited number of expectation value access is better than one of the best local optimisation methods, COBYLA, even with large stochastic noise. 
We analyse two layerwise optimisation methods, frozen-layerwise, and unfrozen-layerwise, to improve the optimisability. Both methods outperform the fix ansatz optimisation approach, but unfrozen-layerwise indicates a better optimisation result in general. 

In the scaling analysis section, we present our quantum algorithm's performance as a function of asset numbers $n$ and block lengths $l$. 
On the one hand, we found our quantum algorithm can decrease the mean and minimal value approximation ratio from the maxbias initial state. 
On the other hand, the mean approximation ratio is not improved by optimisation from the warm-started initial state. However, the distribution is skewed towards the global minimum, which allows an improved solution to be measured with constant probability as a function of $n$ and $l$.
Moreover, further investigation shows that our quantum algorithm utilising the warm-started initial state may asymptotically (in $n$ and $l$) outperform the classical method in the number of samples required to find the DGMVP solution.

Unfortunately, our quantum algorithm performs poorly in the presence of thermal relaxation noise. The impact can be mitigated by post-selecting out states not within the constrained feasible region. However, post-selection requires significantly more measurements to maintain a constant stochastic noise in the presence of quantum noise. This indicates drastic improvements in error rates or fault-tolerant quantum computation will be required for viable quantum finance applications.

In short, we showed that our quantum approximation algorithm allows us to predict the DGMVP. Our simulations and analyses reveal the potential advantages and disadvantages of implementing quantum algorithms in the financial industry in the future.

\section{Acknowledgement}
The authors thank Daniel Stilck Fran\c{c}a for valuable discussions on the DGMVP models and scaling figures.
We thank Hanqing Wu and Binren Chen for the discussion of the binary encoding method, Jedrzej Burkat and Jonathan J. Thio for suggestions on mathematical notations, and David R. M. Arvidsson-Shukur and Normann Mertig for giving helpful suggestions and comments. 

\bibliographystyle{unsrt}
\bibliography{main}

\newpage
\appendix

\section{Covariance matrix for portfolio optimisation} \label{app:covariance}
A financial asset $i$ has typically a time series of price data. If we represent it by a random variable $P_i(\bm{T})$, where $T = (T_1,T_2,\dots,T_n)$ is the time series, a $\sigma_{ij}$ can be calculated as
\begin{equation}\label{matrixA2}
	\begin{aligned}
		\sigma _{ij}={\operatorname {cov} (P_i(\bm{T}),P_j(\bm{T}))}
	\end{aligned}
\end{equation}
where $\operatorname {cov}$ is the covariance.
This article generates a portfolio basket among 32 stocks from the NASDAQ and NYSE from "2023-03-01" to "2023-06-30". The price data uses their daily adjust-close price over the same 100 days as raw data.
An example of a basket of 8 selected stocks covariance matrix is shown in \Cref{fig:cov}.

\begin{figure}[ht]
	\centering
	\includegraphics{./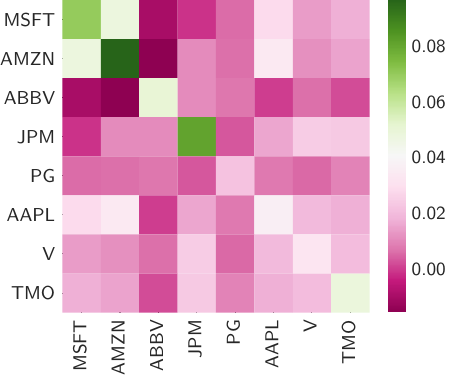}
	\caption{The covariance matrix of 8 selected stocks. MSFT means Microsoft Corp, AMZN means Amazon.com Inc, ABBV means AbbVie Inc, JPM means JPMorgan Chase \& Co, PG means 
Procter \& Gamble Co, APPL means Apple Inc, V means Visa Inc, and TMO means Thermo Fisher Scientific Inc.}\label{fig:cov}
\end{figure}

\section{Feasible region size of the DGMVP with and without the budget constraint}\label{appendix:scaling}
This appendix reveals the method to calculate the feasible region size of the DGMVP with and without the budget constraint, which also represents the total number of portfolio strategies. The feasible region size for the DGMVP [\Cref{qdp}] without the constraint is $2^{ln}$, where $l$ is the number of binary variables for encoding each asset and $n$ is the number of assets.

For the constrained DGMVP as in \Cref{qdp}, the total number of portfolio strategies can be calculated inductively. 
Let $A(n, \mathrm{N})$ be the number of strategies on $n$ assets with $\mathrm{N}$ trading lots. Suppose we spend $m$ trading lots on the $n$th asset then we will have $A(n-1, \mathrm{N}-m)$ remaining strategies on the remaining $n-1$ assets. Therefore, the total number of strategies will be:
\begin{equation}\label{Eq:Assign1}
	\begin{aligned}
		 & A(n,\mathrm{N}) = A(n-1,\mathrm{N}) +A(n-1,\mathrm{N}-1) +\dots + A(n-1,1) +A(n-1,0)
	\end{aligned}
\end{equation}
Inductively, $A(n, \mathrm{N})$ can be calculated as proven in~\Cref{lemma:feas}. 
\begin{lemma}\label{lemma:feas}
The number of strategies $A(n, \mathrm{N})$ of $n$ assets with $\mathrm{N}$ trading lots is
\begin{equation}\label{Eq:Assign2}
	\begin{aligned}
		 & A(n, \mathrm{N}) = \binom{\mathrm{N}+n-1}{n-1}.
	\end{aligned}
\end{equation}
\end{lemma}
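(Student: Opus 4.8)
The plan is to prove the closed form by induction on the number of assets $n$, feeding the recurrence in~\Cref{Eq:Assign1} into the inductive hypothesis and collapsing the resulting binomial sum with the hockey-stick identity. For the base case $n=1$, a single asset must absorb all $\mathrm{N}$ lots in exactly one way, so $A(1,\mathrm{N}) = 1 = \binom{\mathrm{N}}{0}$, which matches~\Cref{Eq:Assign2}.

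For the inductive step I would assume $A(n-1,k)=\binom{k+n-2}{n-2}$ for every $k\ge 0$, rewrite the recurrence compactly as $A(n,\mathrm{N})=\sum_{k=0}^{\mathrm{N}} A(n-1,k)$, and substitute to obtain
\begin{equation}
    A(n,\mathrm{N}) = \sum_{k=0}^{\mathrm{N}} \binom{k+n-2}{n-2}.
\end{equation}
Reindexing by $j=k+n-2$ turns the right-hand side into $\sum_{j=n-2}^{\mathrm{N}+n-2}\binom{j}{n-2}$, which the hockey-stick identity collapses to $\binom{\mathrm{N}+n-1}{n-1}$, completing the induction and reproducing~\Cref{Eq:Assign2}.

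The summation is the only step requiring genuine care, so I expect the hockey-stick evaluation to be the main obstacle: if I did not wish to invoke it as a black box, I would prove it by a secondary induction on $\mathrm{N}$ using Pascal's rule $\binom{m}{r}+\binom{m}{r+1}=\binom{m+1}{r+1}$. As a cleaner alternative that sidesteps the recurrence entirely, the lemma also admits a direct stars-and-bars argument: a strategy assigning $\mathrm{N}$ identical trading lots across $n$ assets is in bijection with an arrangement of $\mathrm{N}$ stars and $n-1$ bars in a row, of which there are exactly $\binom{\mathrm{N}+n-1}{n-1}$. The base case and the reindexing are routine, so either route reduces the proof to a single combinatorial identity.
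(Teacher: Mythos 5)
Your proposal is correct and follows essentially the same route as the paper: induction on the number of assets with base case $A(1,\mathrm{N})=1=\binom{\mathrm{N}}{0}$, substituting the inductive hypothesis into the recurrence and collapsing the binomial sum via the hockey-stick identity (which the paper carries out by telescoping Pascal's rule). The stars-and-bars bijection you mention as an alternative is a valid and arguably cleaner shortcut, but it is not the argument the paper uses.
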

\begin{proof}
Suppose~\Cref{Eq:Assign2} holds true for some specific $n$ and all $\mathrm{N}'$, then
\begin{equation}
	\begin{aligned}
		  A(n+1, \mathrm{N}') &= \sum_{m =0}^{\mathrm{N}'}A(n,m)\\
            &= \sum_{m =0}^{\mathrm{N}'} \binom{m+n-1}{n-1} \\
            &= \binom{\mathrm{N}'+n+1-1}{n+1-1},\\
	\end{aligned}
\end{equation}
where the last equality follows from the rules of Pascal's triangle:
\begin{equation}
	\begin{aligned}
		  \binom{j}{k}\equiv \binom{j+1}{k+1}-\binom{j}{k+1}.
	\end{aligned}
\end{equation}
Note that $A(1,\mathrm{N}')=1\equiv\binom{\mathrm{N}'}{0}$ for all $\mathrm{N}'$. Thus, by induction \Cref{Eq:Assign2} must hold for all $n$ and $\mathrm{N}'$.
\end{proof}

When considering a binary encoding, we define a new notation of the number of strategies:
\begin{equation}\label{Eq:Assign3}
	\begin{aligned}
		 & B(n, l) = \binom{2^l+n-2}{n-1},
	\end{aligned}
\end{equation}
where $l$ is the binary block length for encoding each asset. 

In \Cref{fig:anaylysis_scale_l_n}, we observe both the constrained and unconstrained feasible reason sizes grow exponentially. However, the size of the feasible region with the budget constraint grows much slower than the one without constraint. The gaps become more significant on an exponential scale, which indicates our hard mixing operator will exclude drastic amounts of unqualified solutions in a larger $n,l$ configuration.

\section{Initial state extra preparation methods and simulations}\label{app:inistat}
In this appendix, we compare the initial states defined in \Cref{sec:initialstate} and below. To facilitate this comparison, we present both $\alpha_{\mean}$ and $\alpha_{\min}$ as a function of $p$ for each initial state in \Cref{fig:initialCom}. We find distinctly different trends in $\alpha_{\mean}$ and $\alpha_{\min}$: On the one hand for $\alpha_{\mean}$, there is an optimal $p$ before which the stochastic measurement noise causes the optimiser to return increasingly poor values for $\alpha_{\mean}$---for a thorough analysis of stochastic measurement noise see~\Cref{sec:paraopt}. On the other hand $\alpha_{\min}$ generally decreases with increasing $p$ and is impacted less by the stochastic measurement noise. Fortunately, despite optimising with respect to estimated averaged values, it is $\alpha_{\min}$ that defines the quality of our final solution to the DGMVP model. Thus, we conclude that $|\boldsymbol{w}\rangle_{ws}$ performs the best followed by $|\boldsymbol{w}\rangle_{mb}$. Henceforth, we will restrict our investigations to $|\boldsymbol{w}\rangle_{ws}$ and $|\boldsymbol{w}\rangle_{mb}$; not only for their superior performance but also as they represent two disjoint areas for potential quantum advantage: $|\boldsymbol{w}\rangle_{ws}$ represents a \textit{prior-knowledge} initialisation as a good classical approximation obtainable in polynomial time and allows us to probe the potential for improved quantum-enabled approximations in the non-asymptotic regime. Alternatively, $|\boldsymbol{w}\rangle_{mb}$ represents a \textit{zero-knowledge} initialisation allowing us to investigate possible quantum-enabled accelerations in obtaining an approximation at least as good as $|\boldsymbol{w}\rangle_{ws}$.

\textit{The approximated equal-weighted state $|\boldsymbol{w}\rangle_{ew}$:---}This state represents an approximation to an equal-weighted portfolio strategy (50/50 portfolio strategy) and consists of buying each asset in equal weight w.r.t. the total capital. Researchers often use an equal-weighted portfolio strategy as a benchmark for other strategies~\cite{malladiEqualweightedStrategyWhy2017}. An approximated equal-weighted strategy of the DGMVP can be found by iteratively buying the unit trading lot (binary precision $a$) from the first asset to the end. For a loop of buying one trading lot for $n$ assets, the total spent is $na$. The maximal repetition of completing this buying loop is at most $\floor*{1/(n a)}$ times because of the budget constraint. Then, we assign remaining budgets to buy one trading lot $a$ sequentially from the first asset to the $p$th until no money left and get an equation as
\begin{equation}\label{eq:eqweigt}
	\begin{aligned}
		 & 1- \floor*{\frac{1}{n a}} n a-p a=0.
	\end{aligned}
\end{equation}
An approximated equal-weighted state $|\boldsymbol{w}\rangle_{ew}$ is to prepare the asset weights as 
\begin{equation}\label{eq:ew}
	w_{i} = \left\{\begin{array}{l}
		a\hat{z}+a, i\leq p\\
		a\hat{z}, i>p
		\end{array}\right.,
\end{equation}
where $\hat{z} = \floor*{1/(n a)}$.

\textit{The random-weighted state $|\boldsymbol{w}\rangle_{rd}$:---} This state is prepared by generating a uniformly random set of weights satisfying the budget constraint and the binary block encoding. 
 
\begin{figure}[h]
	\centering	\includegraphics{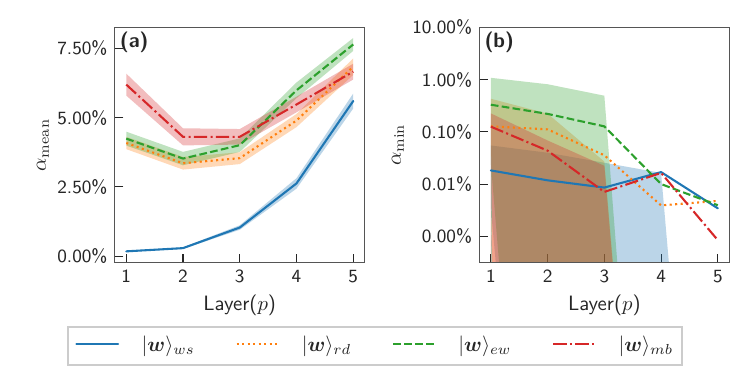}
	\caption[The comparison between different initial states]{A comparison of the impact of initial states on the performance of our QAOA ansatz: (a) $\alpha_{\mean}$, (b) $\alpha_{\min}$. We present the optimised $\alpha_{\mean}$ and corresponding $\alpha_{\min}$ over randomly selected GMVP instances from the market as a function of the number of the layer $p$. We compare the warm-started $|\boldsymbol{w}\rangle_{ws}$, randomly initialised $|\boldsymbol{w}\rangle_{rd}$, the equally weighted $|\boldsymbol{w}\rangle_{ew}$, and the maxbias $|\boldsymbol{w}\rangle_{mb}$ initial states. The shaded region in (a) indicates the standard error of the data, and the shaded region in (b) indicates the 50\% to 90\% percentile of data. 
    We only use warm-started states that are not already equal to the global minimum to facilitate a fair comparison. The QAOA ansatz uses the mixing operator $\hat{M}_1(\beta)$ with the ansatz initial parameters randomly generated. The optimiser is DA, and the optimisation method is the fix ansatz approach with the number of shots during optimisation $N_m =2^{4}=16$ and the final expectation value measurement shots $N_M = 2^{16} = 65536$.  
	}\label{fig:initialCom}
\end{figure}

\section{Quantum bridge on two-qubit excitations}\label{app:two-qubitex}
In this appendix, we analyse the action of two-qubit excitations in a specific design pattern.
We define the following pair of operator to exchange qubit excitation between two qubits:
\begin{equation}\label{app:mixerOp1}
	\begin{aligned}
		 & S_{AB}^- \coloneqq Q^{\dagger}_{A}Q_{B} - Q_{A}Q^{\dagger}_{B} \\
		 & S_{AB}^+ \coloneqq Q^{\dagger}_{A}Q_{B} + Q_{A}Q^{\dagger}_{B},
	\end{aligned}
\end{equation}
where $A$ and $B$ are the qubits sub-systems acted upon non-trivially.
By expanding them with Pauli operators, we have
\begin{align}
	 & S^+_{AB} = \frac{1}{2}(X_{A}X_{B}+ Y_{A}Y_{B}) \\
	 & S^-_{AB} = \frac{i}{2}(X_{A}Y_{B}- Y_{A}X_{B}).
\end{align}

If we apply a Pauli-$Z$ operator to one of the qubit sub-systems of ${S}^{\pm}_{AB}$, we have,
\begin{equation}\label{SZrelation1}
	\begin{aligned}
		{S}_{AB}^{-} Z_{A} & =-{S}_{AB}^{-} Z_{B}= \frac{i}{2} (X_AY_B -Y_AX_B) Z_A  = \frac{1}{2}(X_AX_B +Y_AY_B) = {S}^{+}_{AB},     \\
		{S}_{AB}^{+} Z_{A} & = {S}_{AB}^{+} Z_{B} = \frac{1}{2} (X_AX_B +Y_AY_B) Z_A = \frac{1}{2}(-iY_AX_B +iX_AY_B)=\frac{i}{2}(X_AY_B -Y_AX_B) = {S}^{-}_{AB}, \\
		Z_{A}  {S}_{AB}^{-} & =-Z_{B}  {S}_{AB}^{-} = -  {S}_{AB}^{-} Z_{A}  = - {S}^{+}_{AB}, \\
		Z_{A}  {S}_{AB}^{+} & =Z_{B}  {S}_{AB}^{+}  = - {S}_{AB}^{+} Z_{A} =  {S}^{-}_{AB}.
	\end{aligned}
\end{equation}
If we apply $Z_AZ_B$, then,
\begin{equation}\label{SZrelation13}
	\begin{aligned}
		{S}_{AB}^{-} Z_{A} Z_{B} & =Z_{A} Z_{B} {S}_{AB}^{-} =\frac{i}{2} (X_AY_B -Y_AX_B) Z_A Z_B =\frac{i}{2} (Y_A X_B -  X_A Y_B ) = - {S}^{-}_{AB} \\
		{S}_{AB}^{+} Z_{A} Z_{B} & =Z_{A} Z_{B} {S}_{AB}^{+} =\frac{1}{2} (X_AX_B +Y_AY_B)Z_A Z_B = - {S}^{+}_{AB}.
	\end{aligned}
\end{equation}
We can expand the exponential of ${S}^{-}_{AB}$ as
\begin{equation}\label{SwapBridge1}
	\begin{aligned}
		e^{\beta  {S}^{-}_{AB}} = & e^{\frac{i \beta}{2}(X_{A}Y_{B}-Y_{A}X_{B})}=e^{\frac{i \beta}{2}X_{A}Y_{B}}e^{-\frac{i \beta}{2}Y_{A}X_{B}}                                                   \\
		=                         & ( \cos\frac{\beta}{2}\mathds{1} + i \sin\frac{\beta}{2}X_{A}Y_{B} )(  \cos\frac{\beta}{2}\mathds{1} - i \sin\frac{\beta}{2}Y_{A}X_{B} )                        \\
		=                         & \cos^2\frac{\beta}{2}\mathds{1} + \sin^2\frac{\beta}{2}Z_A Z_B +i\sin\frac{\beta}{2}\cos\frac{\beta}{2} X_AY_B -i\sin\frac{\beta}{2}\cos\frac{\beta}{2} Y_AX_B \\
		=                         & \cos^2\frac{\beta}{2}\mathds{1} + \sin^2\frac{\beta}{2}Z_A Z_B +i\frac{\sin\beta}{2} (X_AY_B - Y_AX_B)                                                         \\
		=                         & \cos^2\frac{\beta}{2}\mathds{1} + \sin^2\frac{\beta}{2}Z_A Z_B +  \sin\beta  {S}^{-}_{AB}                        
	\end{aligned}
\end{equation}
where $\mathds{1}$ and $Z_A Z_B$ leave the occupations of the computational basis states invariant, ${S}^{-}_{AB}$ varies the occupations, and the parameter $\beta$ controls the magnitude of the change. 

If we apply another operator $e^{\beta  {S}^{-}_{BC}}$ to the right side of $e^{\beta  {S}^{-}_{AB}}$ [see \Cref{fig:AppQE2}(a)], we have
\begin{equation}\label{SwapBridge2}
	\begin{aligned}
		  & e^{\beta  {S}^{-}_{AB}}e^{\beta  {S}^{-}_{BC}}                                           \\
		= & e^{\frac{i \beta}{2}(X_{A}Y_{B}-Y_{A}X_{B})}e^{\frac{i \beta}{2}(X_{B}Y_{C}-Y_{B}X_{C})} \\
		= & \left(\cbf{2}{2}\mathds{1} + \sbf{2}{2}Z_A Z_B +\sbb{} {S}^{-}_{AB} \right) \left( \cbf{2}{2}\mathds{1} + \sbf{2}{2} Z_B Z_C + \sbb{}  {S}^{-}_{BC} \right)               \\
  = & \cbf{4}{2}\mathds{1}+\frac{1}{4}\sbb{2}Z_B Z_C  +\frac{1}{4}\sbb{2} Z_A Z_B +  \sbf{4}{2}Z_A Z_C + \sbb{}\cbf{2}{2}{S}^{-}_{BC} + \sbb{} \cbf{2}{2} {S}^{-}_{AB} \\
  &\quad   - \sbf{2}{2}\sbb{} Z_A{S}^{+}_{BC} - \sbf{2}{2}\sbb{} {S}^{+}_{AB}Z_C+ \sbb{2} {S}^{-}_{AB}{S}^{-}_{BC}.
	\end{aligned}
\end{equation}

If we continue to add another operator $e^{\beta  {S}^{-}_{BC}}$ to the left of $e^{\beta  {S}^{-}_{AB}}$ [see \Cref{fig:AppQE2}(b)], we have
\begin{equation}\label{SwapBridge3}
	\begin{aligned}
		  & e^{\beta  {S}^{-}_{BC}} e^{\beta  {S}^{-}_{AB}}e^{\beta  {S}^{-}_{BC}}                                                                             \\
        = & \left(e^{\beta  {S}^{-}_{BC}} e^{\beta  {S}^{-}_{AB}}\right) e^{\beta  {S}^{-}_{BC}}  \\
	= & \cbf{6}{2}\mathds{1} + \frac{1}{4}\cbf{2}{2}\sbb{2} Z_AZ_B + \sbb{}\cbf{4}{2} {S}^{-}_{AB} \\ 
    & \quad + \frac{1}{4} \sbb{2}\cbf{2}{2} Z_BZ_C + \frac{1}{4} \sbb{2} \sbf{2}{2} Z_AZ_C + \frac{1}{4} \sbb{3} Z_C {S}^{+}_{AB} \\
    &  \quad + \frac{1}{4} \sbb{2} \cbf{2}{2}Z_AZ_B + \frac{1}{4} \sbb{2}\sbf{2}{2} \mathds{1} - \frac{1}{4} \sin^3\beta {S}^{-}_{AB} \\
    & \quad + \sbf{4}{2}\cbf{2}{2}Z_AZ_C +\sbf{6}{2} Z_BZ_C -\sbf{4}{2}\sbb{} Z_C {S}^{+}_{AB} \\
    & \quad + \sbb{} \cbf{4}{2} {S}^{-}_{BC} + \frac{1}{4}\sbb{3} {S}^{+}_{BC}Z_A + \sbb{2}\cbf{2}{2}{S}^{-}_{BC}{S}^{-}_{AB} \\
     & \quad + \sbb{} \cbf{4}{2} {S}^{-}_{AB} - \frac{1}{4} \sbb{3}{S}^{-}_{AB} + \sbb{2} \cbf{2}{2}{S}^{-}_{AB}{S}^{-}_{AB} \\
    & \quad -\frac{1}{4} \sbb{3} Z_A{S}^{+}_{BC} + \sbf{4}{2}\sbb{} {S}^{-}_{BC} +\sbf{2}{2} \sbb{2} {S}^{+}_{BC}{S}^{+}_{AB} \\
    & \quad  -\frac{1}{4} \sbb{3}{S}^{+}_{AB}Z_C + \sbf{4}{2} \sbb{}{S}^{+}_{AB}Z_C - \sbf{2}{2} \sbb{2} {S}^{+}_{AB}{S}^{-}_{AB} Z_C\\
    & \quad + \sbb{2} \cbf{2}{2} {S}^{-}_{AB}{S}^{-}_{BC} + \sbb{2} \sbf{2}{2}{S}^{-}_{AB}{S}^{-}_{BC} Z_AZ_B \\
     = & \left(\cos^6\frac{\beta}{2} + \frac{1}{4} \sin^2\beta \sin^2\frac{\beta}{2} \right) \mathds{1} + \frac{1}{2}\cos^2\frac{\beta}{2}\sin^2\beta Z_AZ_B \\
    & \quad  + \frac{1}{2} \sbb{2} \sbf{2}{2} Z_AZ_C + \left(\frac{1}{4} \sbb{2} \cbf{2}{2} + \sbf{6}{2}\right) Z_BZ_C \\
    & \quad + \left( 2\sbb{}\cbf{4}{2} - \frac{1}{2}\sbb{3} \right){S}^{-}_{AB} + \left(  \sbb{} \cbf{4}{2}  + \sbf{4}{2} \sbb{}\right){S}^{-}_{BC} \\
    & \quad  +\sbb{2} \cbf{2}{2}{S}^{-}_{AB}{S}^{-}_{AB} - \sbf{2}{2} \sbb{2}{S}^{+}_{AB}{S}^{-}_{AB} Z_C + \sbb{2}\cbf{2}{2} {S}^{-}_{BC}{S}^{-}_{AB} \\ 
    &+ \sbf{2}{2} \sbb{2}{S}^{+}_{BC}{S}^{+}_{AB} + \sbb{2}\cbf{2}{2}  {S}^{-}_{AB}{S}^{-}_{BC} + \sbb{2} \sbf{2}{2}{S}^{-}_{AB}{S}^{-}_{BC} Z_AZ_B \\
    = & k^\beta_{1} \mathds{1} +  k^\beta_{2}Z_AZ_B + k^\beta_{3}Z_BZ_C + k^\beta_{4} S_{AB}^{-} +k^\beta_{5} S_{BC}^{-} +k^\beta_{6} S_{AC}^{+},
	\end{aligned}
\end{equation}
where 
\begin{equation}
\begin{aligned}
k_1^\beta & = \cos^6\frac{\beta}{2} + \frac{1}{4} \sin^2\beta \sin^2\frac{\beta}{2} -\frac{1}{2} \sbb{2}\cbf{2}{2} \\
k_2^\beta & = \cos^2\frac{\beta}{2}\sin^2\beta  \\
k_3^\beta & = \frac{1}{4} \sbb{2} \cbf{2}{2} + \sbf{6}{2}+\frac{1}{2}\sbf{2}{2} \sbb{2} \\
k^\beta_{4} & = 2\sbb{}\cbf{4}{2} - \frac{1}{2}\sbb{3} \\
k^\beta_{5} & =  \sbb{} \cbf{4}{2}  -\sbf{4}{2} \sbb{} \\
k^\beta_{6} & = \sbb{2}.
\end{aligned}
\end{equation}
In the above derivation, we have used the following identities in addition to those in \Cref{SZrelation1,SZrelation13}:
\begin{equation}\label{SSrelation1}
\begin{aligned}
S_{AB}^{+}S_{AB}^{-} &= - S_{AB}^{-}S_{AB}^{+} = \frac{1}{2}\left(Z_A-Z_B \right), \\
S_{AB}^{+}S_{AB}^{+} &=-S_{AB}^{-}S_{AB}^{-} = \frac{1}{2}\left(\mathds{1} -Z_A Z_B\right)\\
S_{BC}^{-}S_{AB}^{-} & = \frac{1}{2} S_{AC}^{+} -\frac{1}{2} S_{AC}^{-} Z_B, \quad S_{BC}^{+}S_{AB}^{+} = \frac{1}{2} S_{AC}^{+} -\frac{1}{2}S_{AC}^{-}Z_B \\
{S}^{-}_{AB}{S}^{-}_{BC} & ={S}^{-}_{AB}{S}^{-}_{BC} Z_AZ_B = \frac{1}{2} S_{AC}^{-}Z_B +\frac{1}{2} S_{AC}^{+}\\
{S}^{-}_{AB}{S}^{-}_{BC}{S}^{-}_{AB} & =  0.
\end{aligned}
\end{equation}

In~\Cref{SwapBridge3}, the support $\mathds{1}, Z_A Z_B, Z_BZ_C$ leave the occupations of the computational basis states invariant, the support ${S}^{-}_{AB}, {S}^{-}_{BC}, {S}^{+}_{AC}$ varies the occupations, and the parameter $\beta$ controls the magnitude of the change of total superposition. The decomposed supports are linearly independent of each other.
In \Cref{SwapBridge3}, we observe in addition to the expected support on $S_{AB}^{-}$ and $S_{BC}^{-}$ as shown directly in the circuit~\Cref{fig:AppQE2}(b), the two $e^{\beta  {S}^{-}_{BC}}$ operators in $e^{\beta  {S}^{-}_{BC}} e^{\beta  {S}^{-}_{AB}}e^{\beta  {S}^{-}_{BC}}$ act as a bridge to generate support on $S_{AC}^{+}$, and we name this phenomenon as the quantum bridge on two-qubit excitations.
In fact, this result, generating new support, can also be extended to additional ancilla by repeatedly sandwiching the circuit with additional two-qubit excitation operators.

\begin{figure}[h]
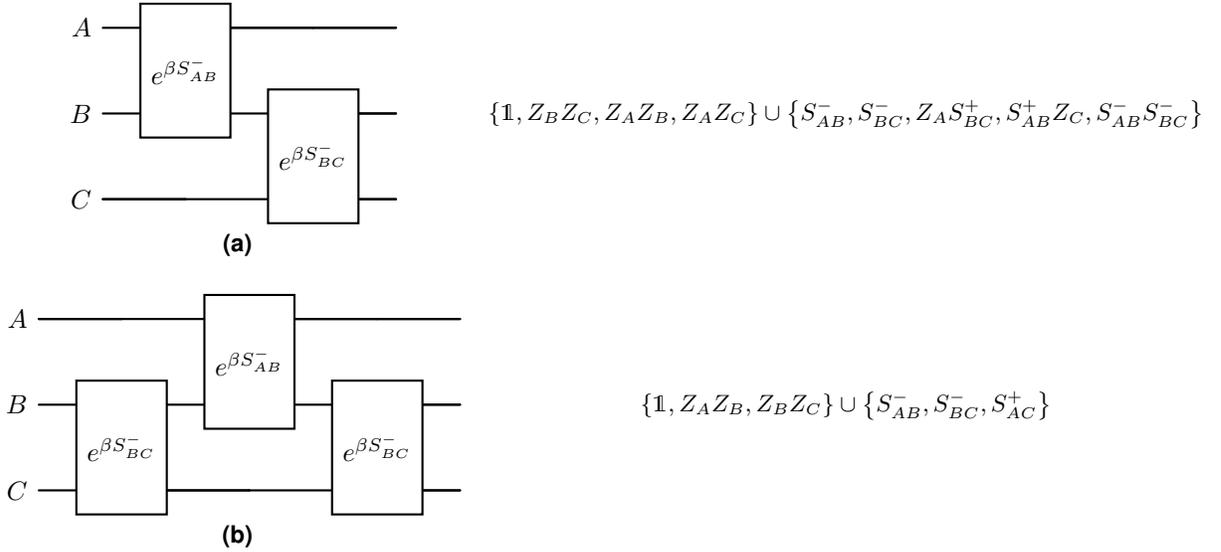

	\centering
	\begin{tabular}{c c}
		\usebox\boxA & $\left\{\mathds{1},Z_B Z_C,Z_A Z_B,Z_A Z_C \right\}\cup \left\{ {S}^{-}_{AB},{S}^{-}_{BC}, Z_A{S}^{+}_{BC},{S}^{+}_{AB}Z_C,{S}^{-}_{AB}{S}^{-}_{BC} \right\} $ \\
		\textbf{\fontfamily{phv}\selectfont (a)}          &           \\
		\addlinespace
		\addlinespace
		\usebox\boxC & $ \left\{\mathds{1},Z_AZ_B, Z_BZ_C \right\}\cup \left\{{S}^{-}_{AB}, {S}^{-}_{BC}, S_{AC}^{+} \right\} $  \\
		\textbf{\fontfamily{phv}\selectfont (b)}          & 
	\end{tabular}
	\caption[two-qubit excitation]{Quantum circuit diagrams of (a) $e^{\beta  {S}^{-}_{AB}}e^{\beta  {S}^{-}_{BC}}$ and (b) $e^{\beta  {S}^{-}_{BC}} e^{\beta  {S}^{-}_{AB}}e^{\beta  {S}^{-}_{BC}}$. The set on the right of each circuit is the linear independent support of the unitary generated by the quantum circuit. These operators form an orthogonal basis for the subspace with respect to the Hilbert-Schmidt inner product. The first set includes operators that leave the occupation of computational basis states invariant. The second set includes operators that vary the occupation.} \label{fig:AppQE2}
\end{figure}

\section{Quantum bridge on three-qubit excitations}\label{app:three-qubitex}
In this appendix, we analyse the action of three-qubit excitations. Similar to the two-qubit excitation, for the three-qubit excitation operator as in \Cref{mixereq1}, we can define a pair of operators:
\begin{equation}\label{app:mixerOp2}
	\begin{aligned}
		 & P_{ABC}^{\pm} \coloneqq Q^{^\dagger}_{A} Q_{B} Q_{C}\pm Q_{A} Q^{\dagger}_{B} Q^{\dagger}_{C}. 
	\end{aligned}
\end{equation}
In the Pauli basis, these operators are
\begin{align}
	 & P^+_{ABC} = \frac{1}{4}( X_{A} X_{B} X_{C}+ Y_{A} X_{B} Y_{C}- X_{A} Y_{B} Y_{C}+Y_{A} Y_{B} X_{C}), \\
	 & P^-_{ABC}  = \frac{i}{4} (X_{A}X_{B}Y_{C}+X_{A}Y_{B}X_{C}- Y_{A}X_{B}X_{C}+Y_{A}Y_{B}Y_{C}).
\end{align}
Notice $P_{ABC}^{\pm}$ is invariant under swapping the $B$ and $C$ indices. Applying Pauli-$Z$ operators we find the following identities:
\begin{equation}\label{SwapRela6}
	\begin{aligned}
		 & {P}^{+}_{ABC} Z_A =  -Z_A  {P}^{+}_{ABC} =  {P}^{-}_{ABC}              \\
		 & {P}^{+}_{ABC} Z_B = {P}^{+}_{ABC} Z_C =  -Z_B  {P}^{+}_{ABC} = -Z_C  {P}^{+}_{ABC} = - {P}^{-}_{ABC}             \\
		 & {P}^{+}_{ABC} Z_AZ_B = {P}^{+}_{ABC} Z_AZ_C=  Z_AZ_B  {P}^{+}_{ABC} = Z_AZ_C  {P}^{+}_{ABC}= - {P}^{+}_{ABC}        \\
		 & {P}^{+}_{ABC} Z_BZ_C =  Z_BZ_C  {P}^{+}_{ABC} =  {P}^{+}_{ABC}         \\
		 & {P}^{+}_{ABC} Z_AZ_BZ_C  =  Z_AZ_BZ_C  {P}^{+}_{ABC} =  {P}^{-}_{ABC}. 
	\end{aligned}
\end{equation}
Now we will use these identities to express the circuit in \Cref{fig:AppQE3}(a):
\begin{equation}\label{SwapRela7}
	\begin{aligned}
		  & e^{\beta  {S}^-_{CD}}e^{\beta  {P}^-_{A B C}}e^{\beta{S}^-_{CD}}                                                                \\
    = &\left(\cbf{2}{2}\mathds{1} + \sbf{2}{2}Z_C Z_D +  \sbb{}  {S}^{-}_{CD} \right) \\
    & \quad  \left( \frac{1}{4}\left(3+\cbb{}\right)\mathds{1}_{A B C} +\frac{1}{2}\sbf{2}{2}\left(Z_{A}Z_{B} +Z_{A}Z_{C}- Z_{B}Z_{C}\right)+ \sbb{}  {P}^{-}_{A B C}\right) \\
    & \quad 
 \left(\cbf{2}{2}\mathds{1} + \sbf{2}{2}Z_C Z_D +  \sbb{}  {S}^{-}_{CD}  \right) \\
 = &  \left(\frac{1}{4} \cbf{2}{2} (3+\cbb{})\mathds{1} +\frac{1}{4} (3+\cbb{})\sbf{2}{2} Z_CZ_D +\frac{1}{4} (3+\cbb{})\sbb{} S_{CD}^{-} \right.\\
 & \quad  +\frac{1}{8} \sbb{2}(Z_AZ_B+Z_AZ_C-Z_BZ_C) + \frac{1}{2} \sbf{4}{2}(Z_AZ_BZ_CZ_D +Z_AZ_D-Z_BZ_D) \\ 
 & \quad  +\frac{1}{2} \sbf{2}{2} \sbb{}(S_{CD}^{-} Z_AZ_B + S_{CD}^{+}Z_A -S_{CD}^{+}Z_B) \\
 & \quad  +\left.\cbf{2}{2}\sbb{}P_{ABC}^{-}+\sbf{2}{2}\sbb{}P_{ABC}^{+}Z_D +\sbb{2} S_{CD}^{-}P_{ABC}^{-}\right) \\
 & \quad 
 \left(\cbf{2}{2}\mathds{1} + \sbf{2}{2}Z_C Z_D +  \sbb{}  {S}^{-}_{CD}  \right) \\
 & =m_1^\beta \mathds{1} +m_2^\beta Z_AZ_B+m_3^\beta Z_CZ_D+m_4^\beta Z_AZ_C+m_5^\beta Z_BZ_C+m_6^\beta Z_AZ_BZ_CZ_D\\
  & \quad  +m_7^\beta S_{CD}^{-} + m_8^\beta Z_AZ_BS_{CD}^{-} +  m_{9}^\beta P_{ABC}^{-} + m_{10}^\beta P_{ABD}^{+}\ckl{,}
	\end{aligned}
\end{equation}
where 
\begin{equation}
\begin{aligned}
m_1^\beta & = \frac{1}{4}(3+\cbb{})(\frac{1}{4}(3+\cos({2\beta}) -\frac{1}{2} \sbb{2}), \\
m_2^\beta &= \frac{1}{2} \sbf{6}{2} +\frac{1}{8}\sbb{2}\cbf{2}{2} -\frac{1}{4}\sbf{2}{2}\sbb{2}, \\
m_3^\beta & = \frac{1}{4} \sbb{2}(3+\cbb{}), \\
m_4^\beta & = \frac{1}{8} \sbb{2}\cbf{2}{2}+\frac{1}{2}\sbf{6}{2}+\frac{1}{4}\sbf{2}{2}\sbb{2}, \\
m_5^\beta & = -\frac{1}{8} \sbb{2}\cbf{2}{2}-\frac{1}{2}\sbf{6}{2}-\frac{1}{4}\sbf{2}{2}\sbb{2}, \\
m_6^\beta & = \frac{1}{2} \sbf{2}{2}\sbb{2}, m_7^\beta = \frac{1}{4}(3+\cbb{})\sin(2\beta)  \\
m_8^\beta & = \frac{1}{4}\sbb{3} -\sbf{4}{2}\sbb{}, m_{9}^\beta  =\cbb{}\sbb{}, m_{10}^\beta = \sbb{2}\ckl{.\sout{,}}
\end{aligned}
\end{equation}
In the above derivation, we have used the following identities in addition to those in \Cref{SwapRela6}:
\begin{equation}\label{SPrelation1}
\begin{aligned}
S_{CD}^{-}P_{ABC}^{-} & = \frac{1}{2}(P_{ABD}^{+} -Z_CP_{ABD}^{-} ) \\
P_{ABC}^{-}S_{CD}^{-} & = \frac{1}{2} (P_{ABD}^{+} +Z_CP_{ABD}^{-}) \\
P_{ABC}^{+}S_{CD}^{+} & =(-P_{ABC}^{-} Z_C)(-Z_C S_{CD}^{-} ) = P_{ABC}^{-}S_{CD}^{-} = \frac{1}{2} (P_{ABD}^{+} +Z_CP_{ABD}^{-}) \\
S_{CD}^{-}P_{ABC}^{-}S_{CD}^{-} & = \frac{1}{2}(P_{ABD}^{+} -Z_CP_{ABD}^{-} ) S_{CD}^{-} \\
& = \frac{1}{2}( P_{ABD}^{+}S_{CD}^{-} - P_{ABD}^{-}Z_C S_{CD}^{-})\\
& = \frac{1}{2}( -P_{ABD}^{-} Z_DS_{CD}^{-} - P_{ABD}^{-}Z_C S_{CD}^{-}) \\
&= \frac{1}{2}( -P_{ABD}^{-} S_{CD}^{+} - P_{ABD}^{-}(-S_{CD}^{+})) = 0.
\end{aligned}
\end{equation}

In~\Cref{SwapRela7}, the support $\mathds{1}, Z_A Z_B, Z_CZ_D,Z_AZ_C,Z_BZ_C,Z_AZ_BZ_CZ_D$ leave the occupations of the computational basis states invariant, the support ${S}^{-}_{CD}, Z_AZ_B{S}^{-}_{CD}, P_{ABC}^{-} ,P_{ABD}^{+}$ varies the occupations, and the parameter $\beta$ controls the magnitude of the change of the total superposition. The decomposed supports are linearly independent of each other.
Thus, $e^{\beta  {S}^-_{CD}}e^{\beta  {P}^-_{A B C}}e^{\beta{S}^-_{CD}}$ has support on $P_{ABD}^{+}$ in addition to the expected support on $P_{ABC}^{-}$ and $S_{CD}^-$ and local phase generating terms. The same derivation follows directly for $e^{\beta  {S}^-_{BD}}e^{\beta  {P}^-_{A B C}}e^{\beta{S}^-_{BD}}$ [see \Cref{fig:AppQE3}(b)] as ${P}^-_{A B C}$ is invariant under exchange of the indices $B$ and $C$.

Next we consider the circuit in \Cref{fig:AppQE3}(c):
\begin{equation}\label{SwapRela8}
	\begin{aligned}
		  & e^{\beta  {S}^-_{AD}}e^{\beta  {P}^-_{A B C}}e^{\beta{S}^-_{AD}}                                                                \\
    = &\left(\cbf{2}{2}\mathds{1} + \sbf{2}{2}Z_A Z_D +  \sbb{}  {S}^{-}_{AD} \right) \\
    & \quad \left( \frac{1}{4}\left(3+\cbb{}\right)\mathds{1}_{A B C} +\frac{1}{2}\sbf{2}{2}\left(Z_{A}Z_{B} +Z_{A}Z_{C}- Z_{B}Z_{C}\right)+ \sbb{}  {P}^{-}_{A B C}\right) \\
    & \quad  
 \left(\cbf{2}{2}\mathds{1} + \sbf{2}{2}Z_A Z_D +  \sbb{}  {S}^{-}_{AD}  \right) \\
 = &  \left(\frac{1}{4} \cbf{2}{2} (3+\cbb{})\mathds{1} +\frac{1}{4} (3+\cbb{})\sbf{2}{2} Z_AZ_D +\frac{1}{4} (3+\cbb{})\sbb{} S_{AD}^{-} \right.\\
 & \quad +\frac{1}{8} \sbb{2}(Z_AZ_B+Z_AZ_C-Z_BZ_C) + \frac{1}{2} \sbf{4}{2}( Z_BZ_D-Z_CZ_D-Z_AZ_BZ_CZ_D) \\ 
 & \quad +\frac{1}{2} \sbf{2}{2} \sbb{}( S_{AD}^{+}Z_B +S_{AD}^{+}Z_C - S_{AD}^{-} Z_BZ_C) \\
 & \quad  +\left.\cbf{2}{2}\sbb{}P_{ABC}^{-}-\sbf{2}{2}\sbb{}P_{ABC}^{+}Z_D +\sbb{2} S_{AD}^{-}P_{ABC}^{-}\right) \\
 & \quad
 \left(\cbf{2}{2}\mathds{1} + \sbf{2}{2}Z_A Z_D +  \sbb{}  {S}^{-}_{AD}  \right) \\
 & =m_1^\beta \mathds{1} +m_2^\beta Z_BZ_C+m_3^\beta Z_AZ_D+m_4^\beta Z_AZ_B+m_5^\beta Z_AZ_C+m_6^\beta Z_AZ_BZ_CZ_D\\
  & \quad  +m_7^\beta S_{AD}^{-} + m_8^\beta Z_BZ_CS_{AD}^{-} +  m_{9}^\beta P_{ABC}^{-} + m_{10}^\beta P_{DBC}^{+}\ckl{,}
	\end{aligned}
\end{equation}
where 
\begin{equation}
\begin{aligned}
m_1^\beta & = \frac{1}{4}(3+\cbb{})(\frac{1}{4}(3+\cos({2\beta}) -\frac{1}{2} \sbb{2}), \\
m_2^\beta &= -\frac{1}{2} \sbf{6}{2} -\frac{1}{8}\sbb{2}\cbf{2}{2} +\frac{1}{4}\sbf{2}{2}\sbb{2}, \\
m_3^\beta & = \frac{1}{4} \sbb{2}(3+\cbb{}), \\
m_4^\beta & = \frac{1}{8} \sbb{2}\cbf{2}{2}+\frac{1}{2}\sbf{6}{2}+\frac{1}{4}\sbf{2}{2}\sbb{2}, \\
m_5^\beta & = \frac{1}{8} \sbb{2}\cbf{2}{2}+\frac{1}{2}\sbf{6}{2}+\frac{1}{4}\sbf{2}{2}\sbb{2}, \\
m_6^\beta & = -\frac{1}{2} \sbf{2}{2}\sbb{2}, m_7^\beta = \frac{1}{4}(3+\cbb{})\sin(2\beta)  \\
m_8^\beta & = -\frac{1}{4}\sbb{3} +\sbf{4}{2}\sbb{}, m_{9}^\beta  =\cbb{}\sbb{}, m_{10}^\beta = -\sbb{2}\ckl{.\sout{,}}
\end{aligned}
\end{equation}
In the above derivation, we have used the following identities in addition to those in \Cref{SwapRela6} and \Cref{SPrelation1}:
\begin{equation}\label{SPrelation2}
	\begin{aligned}
		P_{ABC}^{-}S_{AD}^{-} & = \frac{1}{2} (Z_A P_{DBC}^{-} - P_{DBC}^{+}) \\
		S_{AD}^{-}P_{ABC}^{-} & = -\frac{1}{2} (Z_A P_{DBC}^{-} + P_{DBC}^{+}) \\
		P_{ABC}^{+}S_{AD}^{+} & = (P_{ABC}^{-} Z_A)(-Z_A S_{AD}^{-}) = -P_{ABC}^{-}S_{AD}^{-} \\
		S_{AD}^{+}P_{ABC}^{+} & = (-S_{AD}^{-}Z_A )(-Z_AP_{ABC}^{-} ) = S_{AD}^{-}P_{ABC}^{-} \\
		S_{AD}^{-}P_{ABC}^{-}S_{AD}^{-} & = 0.
		\end{aligned}
\end{equation}
In~\Cref{SwapRela8}, the support $\mathds{1},Z_BZ_C, Z_AZ_D,Z_AZ_B,Z_AZ_C,Z_AZ_BZ_CZ_D$ leave the occupations of the computational basis states invariant, the support ${S}^{-}_{AD}, Z_BZ_C{S}^{-}_{AD}, P_{ABC}^{-} ,P_{DBC}^{+}$ vary the occupations, and the parameter $\beta$ controls the magnitude of the change the total superposition. The decomposed supports are linearly independent of each other.
Thus, $e^{\beta  {S}^-_{AD}}e^{\beta  {P}^-_{A B C}}e^{\beta{S}^-_{AD}}$ has support on $P_{DBC}^{+}$ in addition to the expected support on $P_{ABC}^{-}$ and $S_{AD}^-$ and local phase generating terms.

In this appendix, we have shown that by sandwiching a three-qubit excitation with two-qubit excitations that act non-trivially on an ancilla qubit, the resulting unitary will have support on a three-qubit excitation acting upon this ancilla qubit. We name this phenomenon as the quantum bridge on three-qubit excitations. 

\begin{figure}[h]
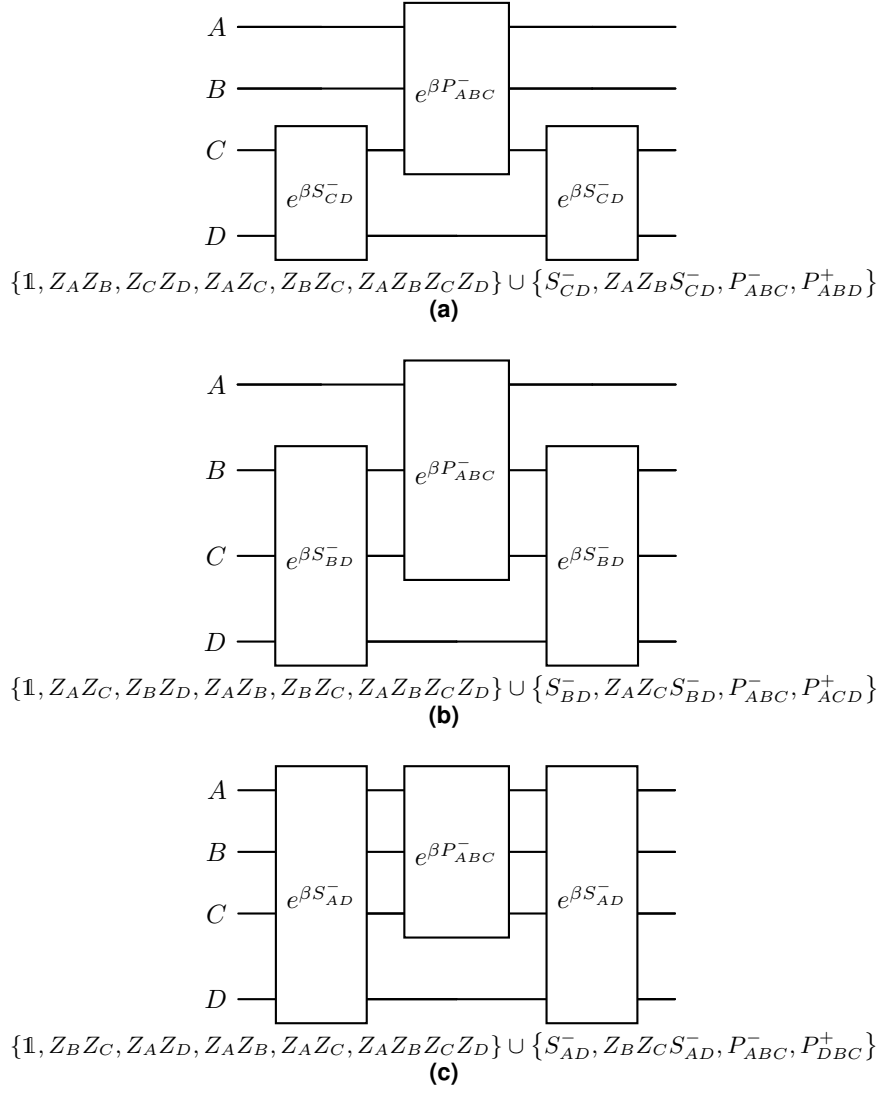

	\centering
	\begin{tabular}{c c}
		\usebox\boxTa &  \\
		$ \left\{\mathds{1},Z_AZ_B, Z_CZ_D,Z_AZ_C,Z_BZ_C,Z_AZ_BZ_CZ_D \right\}\cup \left\{{S}^{-}_{CD}, Z_AZ_B{S}^{-}_{CD}, P_{ABC}^{-} ,P_{ABD}^{+} \right\}$ & \\
		\textbf{\fontfamily{phv}\selectfont (a)}           &           \\
		\addlinespace
		\addlinespace
		\usebox\boxTc & \\
		$ \left\{\mathds{1},Z_AZ_C, Z_BZ_D,Z_AZ_B,Z_BZ_C,Z_AZ_BZ_CZ_D \right\}\cup \left\{{S}^{-}_{BD}, Z_AZ_C{S}^{-}_{BD}, P_{ABC}^{-} ,P_{ACD}^{+} \right\}$ & \\
		\textbf{\fontfamily{phv}\selectfont (b)}           &           \\
		\addlinespace
		\addlinespace
		\usebox\boxTe & \\
		$ \left\{\mathds{1},Z_BZ_C, Z_AZ_D,Z_AZ_B,Z_AZ_C,Z_AZ_BZ_CZ_D \right\}\cup\left\{{S}^{-}_{AD}, Z_BZ_C{S}^{-}_{AD}, P_{ABC}^{-} ,P_{DBC}^{+} \right\} $ & \\
		\textbf{\fontfamily{phv}\selectfont (c)}           & 
	\end{tabular}
	\caption[Three-qubit excitation bridge]{
	Quantum circuits of (a) $e^{\beta  {S}^-_{CD}}e^{\beta  {P}^-_{A B C}}e^{\beta{S}^-_{CD}}$, (b) $e^{\beta  {S}^-_{BD}}e^{\beta  {P}^-_{A B C}}e^{\beta{S}^-_{BD}}$, and (c) $e^{\beta  {S}^-_{AD}}e^{\beta  {P}^-_{A B C}}e^{\beta{S}^-_{AD}}$. The set on the right of each circuit is the linear independent support of the unitary generated by the quantum circuit. These operators form an orthogonal basis for the subspace with respect to the Hilbert-Schmidt inner product. The first set includes operators that leave the occupation of c§§omputational basis states invariant. The second set includes operators that vary the occupation. 
	} \label{fig:AppQE3}
\end{figure}
§§
\section{Mixing operator with a implement of quantum bridges}\label{app:qbm}
In this appendix, we show how the design of the mixing operator $H_{tt'}(\beta)$ (see in~\Cref{sec:mix}) utilises the quantum bridges to reduce the cost of quantum resources and generate sufficient support for binary arithmetic operations between asset blocks.
By using the same notation as in \Cref{sec:mix}, we have
\begin{equation}\label{eq:swprela1}
	\begin{aligned}
		  & \left[S^k_{tt'},  P^{k}_{ttt'}\right]                                                                                                                                                                                                                                  \\
		= & \left[  \frac{i}{2}\left(X^{k}_{t}Y^k_{t'}- Y^k_{t}X^k_{t'}\right), \frac{i}{4} \left(X^{k+1}_{t}X^k_{t}Y^k_{t'}+X^{k+1}_{t}Y^k_{t}X^k_{t'}- Y^{k+1}_{t}X^k_{t}X^k_{t'}+Y^{k+1}_{t}Y^k_{t}Y^k_{t'}\right) \right]                                                      \\
		= & -\frac{1}{8}\left\{ \left[X^k_{t}Y^k_{t'}, X^{k+1}_{t}X^k_{t}Y^k_{t'}\right] +\left[X^k_{t}Y^k_{t'}, X^{k+1}_{t}Y^k_{t}X^k_{t'}\right] - \left[X^k_{t}Y^k_{t'}, Y^{k+1}_{t}X^k_{t}X^k_{t'}\right] +\left[X^k_{t}Y^k_{t'}, Y^{k+1}_{t}Y^k_{t}Y^k_{t'}\right] \right.    \\
		  & \left. \quad \quad -\left[Y^k_{t}X^k_{t'}, X^{k+1}_{t}X^k_{t}Y^k_{t'}\right] - \left[Y^k_{t}X^k_{t'}, X^{k+1}_{t}Y^k_{t}X^k_{t'}\right] + \left[Y^k_{t}X^k_{t'}, Y^{k+1}_{t}X^k_{t}X^k_{t'}\right]-\left[Y^k_{t}X^k_{t'}, Y^{k+1}_{t}Y^k_{t}Y^k_{t'}\right]   \right\} \\
		= & -\frac{1}{8}\left\{ X^{k+1}_{t}+  X^{k+1}_{t}\left[X^k_{t}Y^k_{t'}, Y^k_{t}X^k_{t'}\right] - Y^{k+1}_{t} \left[Y^k_{t'}, X^k_{t'}\right] +Y^{k+1}_{t} \left[X^k_{t}, Y^k_{t}\right] \right.                                                                            \\
		  & \left. \quad \quad -X^{k+1}_{t} \left[Y^k_{t}X^k_{t'}, X^k_{t}Y^k_{t'}\right] - X^{k+1}_{t} + Y^{k+1}_{t} \left[Y^k_{t}, X^k_{t}\right]-Y^{k+1}_{t}\left[X^k_{t'}, Y^k_{t'}\right] \right\}                                                                            \\
		= & 0
	\end{aligned}
\end{equation}
where the second equality follows from $[A+B, C+D] =[A,C]+[B,C]+[A,D]+[B,D]$ and $[c_1 A,c_2B] = c_1c_2[A,B]$ with $c_1,c_2 \in \mathbb{C}$, for the third equality we use the fact that Pauli operators are involutions, and the last equality follows from $[XY, YX] = [YX,XY]$. Thus, we have
\begin{equation}\label{eq:swpexprela}
	\begin{aligned}
		  & e^{-i\beta S^k_{tt'}} e^{-i \beta P^{k}_{ttt'}} \\
		= & e^{-i\beta S^k_{tt'}-i \beta P^{k}_{ttt'}}      \\
		= & e^{-i \beta P^{k}_{ttt'}}e^{-i\beta S^k_{tt'}}
	\end{aligned}
\end{equation}

Then, \Cref{eq:QE23} can be expanded as
\begin{equation}\label{eq:QE23expan}
	\begin{aligned}
		& \quad \ H_{tt'}(\beta)\\ & = \tilde{S}_{tt'}(\beta) \tilde{P}^e_{tt'}(\beta) \tilde{P}^o_{tt'}(\beta) \tilde{S}_{tt'}(\beta)                                        \\
		               & = \left(\bigotimes_{k=1}^l e^{-i\beta S^k_{tt'}}  \right)\left(\bigotimes_{k\in \mathcal{K}_1} e^{-i \beta P^{k}_{ttt'}} \right) \left( \bigotimes_{k\in \mathcal{K}_2} e^{-i \beta P^{k}_{ttt'}} \right) \left(\bigotimes_{k=1}^l e^{-i\beta S^k_{tt'}} \right)                                                     \\
		               & = \left( \bigotimes_{k\in \mathcal{K}_2} e^{-i\beta S^{k}_{tt'}}  \right) \left( \bigotimes_{k\in \mathcal{K}_1} e^{-i \beta P^{k}_{ttt'}} \right) \left( \bigotimes_{k=1}^l e^{-i\beta S^k_{tt'}} \right) \left( \bigotimes_{k\in \mathcal{K}_2} e^{-i \beta P^{k}_{ttt'}}\right)  \left(\bigotimes_{k\in \mathcal{K}_1} e^{-i\beta S^{k}_{tt'}}  \right) \\
		               & = \left( \bigotimes_{k\in \mathcal{K}_2} e^{-i\beta S^{k}_{tt'}}  \right)  \left(\bigotimes_{k\in \mathcal{K}_1} e^{-i \beta P^{k}_{ttt'}} \right) \left( \bigotimes_{k\in \mathcal{K}_2} e^{-i\beta S^{k}_{tt'}}\right) \left(\bigotimes_{k\in \mathcal{K}_1} e^{-i\beta S^{k}_{tt'}} \right)                                                   \\ 
					   & \quad \quad \quad \quad \quad  \left( \bigotimes_{k\in \mathcal{K}_2} e^{-i \beta P^{k}_{ttt'}}\right) \left( \bigotimes_{k\in \mathcal{K}_1} e^{-i\beta S^{k}_{tt'}}\right) \\
		               & =  e^{-i\beta S^1_{tt'}} \left( \bigotimes_{k\in \mathcal{K}_1} e^{-i\beta S^{k+1}_{tt'}} e^{-i \beta P^{k}_{ttt'}}  e^{-i\beta S^{k+1}_{tt'}}\right)  e^{-i\beta S^1_{tt'}} \left(\bigotimes_{k\in \mathcal{K}_2} e^{-i\beta S^k_{tt'}} e^{-i \beta P^{k}_{ttt'}}e^{-i\beta S^k_{tt'}}\right)            \\
	\end{aligned}
\end{equation}
where $\mathcal{K}_1,\mathcal{K}_2$ are defined in \Cref{eq:QE2} and the third equality follows from \Cref{eq:swpexprela}. 

The tensor product over $\mathcal{K}_2$ has support on all $S^k_{tt'}$, $P^k_{t'tt'}$ and $P^k_{ttt'}$ with $k\in\mathcal{K}_2$. As $e^{-i\beta S^1_{tt'}}$ has support on the identity then the support from the tensor product over $\mathcal{K}_2$ is not lost. In addition $e^{-i\beta S^1_{tt'}}$ yields support on $S^1_{tt'}$. Next, the tensor product over $\mathcal{K}_1$ has support on the identity so all prior support is retained. In addition we gain support on all $S^k_{tt'}$, $P^k_{t'tt'}$ and $P^k_{ttt'}$ with $k\in\mathcal{K}_1$. Then, the factor $e^{-i\beta S^1_{tt'}}$ once again retains all prior support as it has support on the identity. Finally, all support in~\Cref{eq:QE23expan} include the binary arithmetic operations between $t$ and $t'$ asset block as introduced in the~\Cref{sec:mix}. Changing the parameter $\beta$ will influence the coefficients of the unitary on the support and thus control the supposition of the computational basis state.

\section{Simulation on qubit excitation mixing operator with different $L$}\label{app:simumix}
In this appendix, we compare mixing operator $\hat{M}_L(\beta)$ with $L=1,2$ provided in~\Cref{sec:mix}.
We observe the loss landscapes corresponding to $\hat{M}_1(\beta)$ have larger gradients, fewer and deeper valleys with roughly the same minimum value as $\hat{M}_2(\beta)$---see data in~\Cref{tab:appsimumix}. A representative example is shown in~\Cref{fig:mixerCom}(left). We also find that the ansatz becomes harder to optimise in the presence of stochastic measurement errors when using $\hat{M}_2(\beta)$ compared to $\hat{M}_1(\beta)$---see~\Cref{fig:mixerCom}(right).
In~\Cref{sec:mix}, we demonstrate that the ansatz wavefunction generated by $\hat{M}_2(\beta)$ is a superposition with support on more computational basis vectors than generated by $\hat{M}_1(\beta)$. The increased support generated by $\hat{M}_2(\beta)$ suggests that as $\beta$ is varied, the total variation in any given occupation will be less sensitive to $\beta$. Thus, we conjecture, this reduces the gradients with respect to $\beta$ and makes the optimisation harder in the presence of stochastic measurement errors. Therefore, we primarily use $\hat{M}_1(\beta)$ as our mixing operator. 

  \begin{table}[!ht]
    \centering
    \setlength{\tabcolsep}{6pt}
    \begin{tabular}{lrrrrr}\toprule
      Mixing Operator & Average Gradient & Largest Gradient & Min & Max$-$Min & Valleys  \\ \cmidrule{1-6}
      $\hat{M}_1(\beta)$ & $0.0298 \pm 0.0025 $ & $0.132\pm 0.011$ & \bm{$0.220 \pm 0.028$} & $0.051 \pm 0.003 $ & $4.2\pm 0.1$  \\ 
      $\hat{M}_2(\beta)$ & $0.0226 \pm 0.0024 $ & $0.101 \pm 0.010$ & \bm{$0.220 \pm 0.028$} & $0.022 \pm 0.002 $ & $6.3 \pm 0.1$ \\
     \bottomrule
    \end{tabular}
        \caption{Tabulated properties of the loss landscapes of mixing operators $\hat{M}_L(\beta)$ with $L=1,2$. The data is collected from 100 simulations each using randomly generated DGMVP instances with the asset number $n=4$ and the asset binary block length $l=5$. In our simulations, we use the $|\bm{w}\rangle_{mb}$ as the initial state and a $p=10$ layer ansatz. We scan the landscape of the fifth mixing operator's parameter $\beta_5\in\left[0,2\pi\right]$ with a resolution of $\pi/500$ and fix all the other ansatz parameters as $\pi/4$. Each data point is estimated with $N_M$=$2^{22}=4194304$ measurement shots. The average (second column) and largest (third column) gradient magnitudes of the approximation ratio are calculated using finite differences. The minimum (fourth column), maximum$-$minimum (fifth column), and the number of valleys (sixth column) are averaged over the $100$ instances. All values for $\hat{M}_1(\beta)$ and $\hat{M}_2(\beta)$ are separated by more than the standard error in the mean except for the minimum (bold).}\label{tab:appsimumix}
  \end{table}

\begin{figure}[h]
	\centering
	\includegraphics{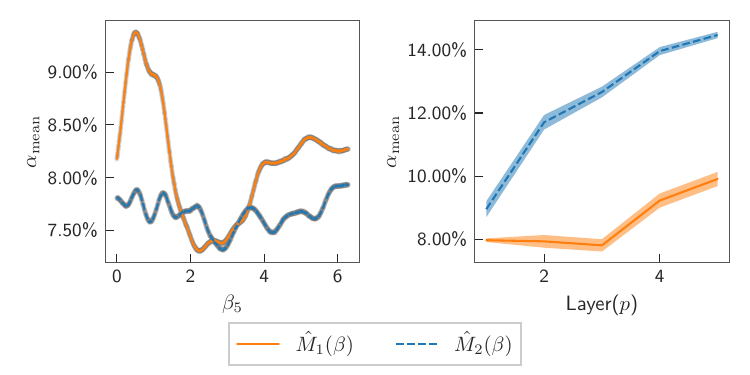}
	\caption[A comparison between designs of the mixing operators.]{               A comparison of the mixing operators $\hat{M}_1(\beta)$ and $\hat{M}_2(\beta)$. The left panel shows a typical loss landscape of the parameter, $\beta_5\in\left[0, 2\pi\right]$, of the fifth mixing operator with a resolution of $\pi/500$ with all the other ansatz parameters fixed as $\pi/4$. In our simulations, we use the $|\bm{w}\rangle_{mb}$ as the initial state and a $p=10$ layer ansatz. Each plotted point is estimated with $N_M$=$2^{22}=4194304$ measurement shots. The thin grey shadow beneath the coloured lines represents ensembles of estimations, each using $N_M$=$2^{22}=4194304$ measurement shots to indicate the stochastic noise. 
The right panel depicts $\alpha_{\mean}$ for the fixed ansatz approach with the $|\boldsymbol{w}\rangle_{rd}$ initial state as a function of the number of layers $p$. 
	The initial parameters $\boldsymbol{\beta}, \boldsymbol{\gamma}$ are sampled from a uniform distribution. We utilise COBYLA as the optimser with $N_m = 2^{10}=1024$, and post-optimisation measurements $N_M = 2^{16}=65536$. The shaded regions denote standard error in the mean. For both panels, the asset number $n = 4$ and the asset binary block length $l=5$.
	}\label{fig:mixerCom}
\end{figure}

\section{Comparison on the comparison of COBYLA and DA}\label{app:ccoda}
In this appendix, we show a collection of statistical simulations that compare the optimisation performance of COBYLA and DA when optimising multilayered ansätze. In~\Cref{fig:optimiserCom}(a), the number of cost function access $N_f$ of COBYLA increases linearly with $p$, while DA consumes roughly a constant number of cost function accesses because we set a fixed $\mathcal{I}$ for DA. Note that $\mathcal{I}$ is a soft constraint and can be slightly exceeded as we allow the final optimiser iteration in the local search to complete after the maximal $\mathcal{I}$ is violated. We observe that DA uses fewer total cost function accesses than COBYLA to achieve smaller values of $\alpha_{\min}$ and $\alpha_{\mean}$ in the large $p$ regime. We also observe in~\Cref{fig:optimiserCom}(b)(c) that DA is robust to stochastic measurement noise as it achieves a lower $\alpha_{\mean}$ and a much lower $\alpha_{\min}$ for nearly all examined $p$. The experiment shows the advantages of using DA to reduce the cost of quantum resources in optimising parameterised quantum circuits.

\begin{figure}[h]
	\centering
	\includegraphics{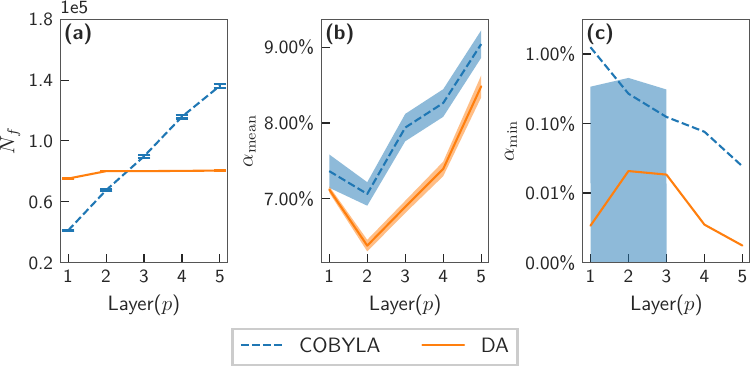}
	\caption[The comparison between dual annealing and COBYLA.]{A performance comparison between DA and COBYLA. The panels from left to right show (a) $N_{f}$, (b) $\alpha_{\mean}$, and (c) $\alpha_{\min}$ as a function of $p$. The results are averaged over 100 simulations with random initial ansatz parameters. We plot $\alpha_{\min}$ on a log scale to emphasise the exponential decay.
	The blue dash line is the COBYLA optimised sample mean with $N_m= 2^{10}=1024$, 
	the orange solid line is the optimised sample mean of DA with $\mathcal{I} = 5000$ and $N_m= 2^{4}=16$. The error bar in (a) is induced by the expectation value estimations access, and the shaded region in (b) and (c) indicates the $25\%$ to $75\%$ percentile of data. These simulations use the mixing operator $\hat{M}_1(\beta)$  and the initial state $|\boldsymbol{w}\rangle_{mb}$ with $n=l=4$. The optimisation method is the fix ansatz approach for every $p$s with the final expectation value measurement shots $N_M=2^{16}=65536$.
 }\label{fig:optimiserCom}
\end{figure}

\section{Further investigations of the hyperparameters impact on performance of DA's optimisation}\label{app:dafuther}
To investigate the performance and identify the best parameters for DA, we compared DA's optimisation performance by varying $\mathcal{I}$ and $N_m$. We consider two methods for fixing $\mathcal{I}$: constant $\mathcal{I}$ and linear growth (LG) in which $\mathcal{I}=5000p$. However, we allow the DA to complete its final local search after the maximal $\mathcal{I}$ is violated, and so $\mathcal{I}$ can be slightly exceeded. 

In~\Cref{fig:da_para_compar}(a), the LG line shows a constant plateau with $p\geq 2$ demonstrating sufficient $\mathcal{I}$ and $N_m$ will stabilise the optimisation. The upward trend with constant $\mathcal{I}$ is discussed in~\Cref{app:inistat}. When decreasing $\mathcal{I}$ and $N_m$, we observe the optimised $\alpha_{\mean}$ will be in general higher, and $\alpha_{\mean}$ increases with increasing $p$.

However, we are interested in the region with small $\mathcal{I}$ and $N_m$ where achieving the real quantum advantage may be possible. By fixing $\mathcal{I}=2000$ (the dashed lines with round markers in~\Cref{fig:da_para_compar}, we find that increasing $N_m$ from $4$ to $8$ and $8$ to $16$ improves $\alpha_{\mean}$ roughly the same amount as increasing $N_m$ from $16$ to $1024$. Thus, we observe the decrease of $\alpha_{\mean}$ with increasing $N_m$ starts to plateau. Now fixing $N_m =16$ (see the dark-green lines in~\Cref{fig:da_para_compar}), we observe decreasing $\mathcal{I}$ from $5000$ to $2000$ has little impact of $\alpha_{\min}$ but a further decrease to $100$ is extremely detrimental. Thus, after $2000$ expectation value estimations, the improvement per expectation value estimation is relatively small. Thus further optimisation is expensive in expectation value estimations.

Another metric of importance is $\alpha_{\min}$, see \Cref{fig:da_para_compar}(b). We observe that $\alpha_{min}$ often has a local maximum around $p=2$ but generally decreases with increasing $p$.
As was observed in~\Cref{app:inistat}, the parameters of the best $\alpha_{\mean}$ do not correspond to the best $\alpha_{\min}$ directly. Thus, focusing on optimising only one ratio may lead to the detriment of the other. 
We find that the choice of $\mathcal{I}=2000$ and $N_m =16$ for DA is a good trade-off between the optimisation costs ($\mathcal{I}$ and $N_m$) and the optimisation performance as indicated by $\alpha_{\mean}$ and $\alpha_{\min}$ for the values of $n$ and $l$ considered in this article.

\begin{figure}[h]
	\centering
	\includegraphics{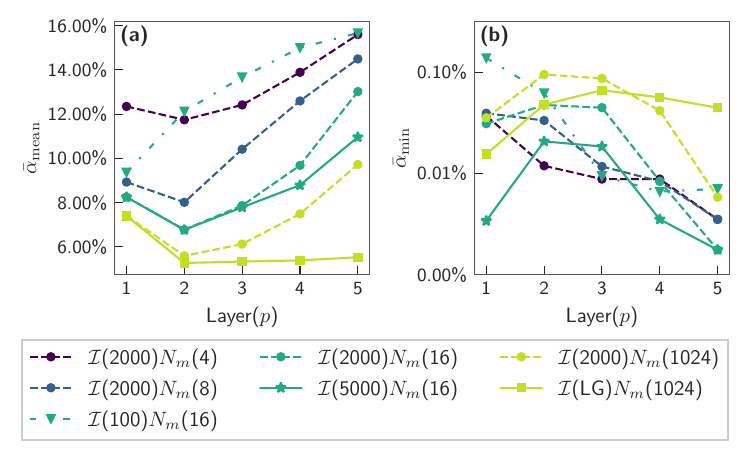}
	\caption[The comparison between dual annealing parameters.]{A comparison of hyperparameters $\mathcal{I}$ and $N_m$ impact on the performance of DA's optimisation as a function of the layer $p$: (a) $\alpha_{\mean}$, (b) $\alpha_{\min}$. Every line in the graphs is the average over 100 simulations with random initial parameters.
 These simulations use the initial state $|\boldsymbol{w}\rangle_{mb}$ and the mixing operator $\hat{M}_1(\beta)$ with $n=l=4$. The optimisation method is the fix ansatz approach for every $p$s with the final expectation value measurement shots $N_M=2^{16}=65536$. 
		}\label{fig:da_para_compar}
\end{figure}

\section{Simulation on binary block length and asset size scaling}
\label{app:sim}
To investigate the scaling of our quantum algorithm for solving the DGMVP, we combine the most efficient methods from previous simulations to perform stochastic simulations on problem sets with varied binary block lengths $l$ and the number of assets $n$. For each $l$ and $n$ our simulations contain covariance matrices generated from randomly selected US stocks---see \Cref{app:covariance}.
For the ansatz circuit, we use $\hat{M}_1(\beta)$ as the mixing operator---see~\Cref{app:simumix}. 
In this appendix, we compare the performance of the initial states $|\bm{w}\rangle_{ws}$ and $|\bm{w}\rangle_{mb}$.   
We apply an unfrozen-layerwise optimisation method (see \Cref{sec:paraopt,sec:layerwise}) up to $p=5$ layers using DA as the classical optimiser. DA uses $N_m = 16$ shots for estimating expectation value during optimisation with a soft constraint $\mathcal{I} = 2000$ expectation value estimations for each newly added layer. All results are obtained by reconstructing the statevectors with optimised parameters to remove the stochastic noise from the post-optimisation expectation value induced by estimating the expectation value with shots. The results are shown in \Cref{fig:app_scaling_1,fig:app_scaling_2,fig:app_scaling_3,fig:app_scaling_4}. We observe that $\alpha_{\mean}^{k\%}$ for the maxbias initial state generally increases with $n$ and $l$---this is expected as the feasible region size increases, but the number of optimisation steps is fixed. On the contrary, for the warm-started initial state $\alpha_{\mean}^{k\%}$ decreases with $n$ and $l$. This is because as $l$ is increased, the DGVMP becomes a closer and closer approximation to the GVMP, and thus the warm-started initial state approaches the global minimum. However, the decrease in $\alpha_{\mean}^{k\%}$ with $n$ is an artefact of post-selecting out covariance matrices for which the warm-started initial state is not the DGVMP solution.

\begin{figure}[h]
	\centering
	\includegraphics{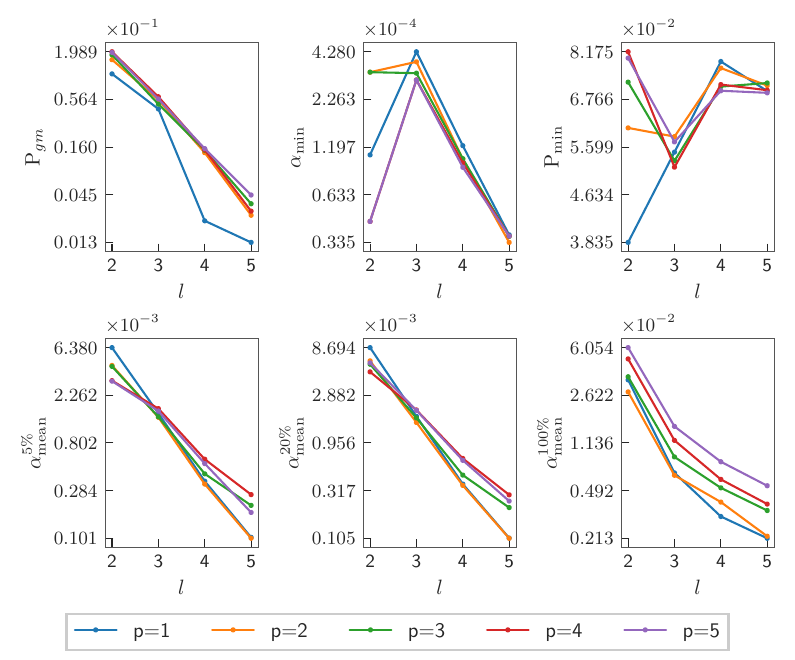}
	\caption[The scaling analysis in l 1]{Stochastic simulations of 100 randomly generated DGMVP instances as a function of binary block lengths $l$ with fixed asset size $n=4$. The initial state is chosen as $|\bm{w}\rangle_{ws}$.
	The top three panels from left to right are the probability of measuring the DGMVP solution $\mathrm{P}_{gm}$, the minimal approximation ratio $\alpha_{\mathrm{min}}$, and the probability of measuring the minimal value $\mathrm{P}_{\min}$. The bottom three panels from left to right are the 5, 20, and 100 percentile of the mean approximation ratio, respectively.}
	\label{fig:app_scaling_1}
\end{figure}

\begin{figure}[h]
	\centering
	\includegraphics{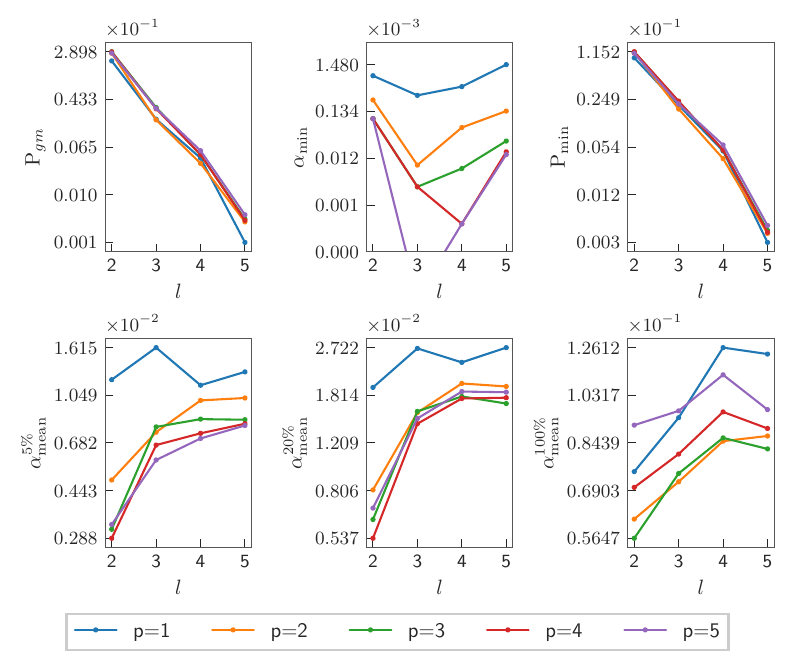}
	\caption[The scaling analysis in l 2]{
    Stochastic simulations of 100 randomly generated DGMVP instances as a function of binary block lengths $l$ with fixed asset size $n=4$. The initial state is chosen as $|\bm{w}\rangle_{mb}$.
	The top three panels from left to right are the probability of measuring the DGMVP solution $\mathrm{P}_{gm}$, the minimal approximation ratio $\alpha_{\mathrm{min}}$, and the probability of measuring the minimal value $\mathrm{P}_{\min}$. The bottom three panels from left to right are the 5, 20, and 100 percentile of the mean approximation ratio, respectively. }
	\label{fig:app_scaling_2}
\end{figure}

\begin{figure}[h]
	\centering
	\includegraphics{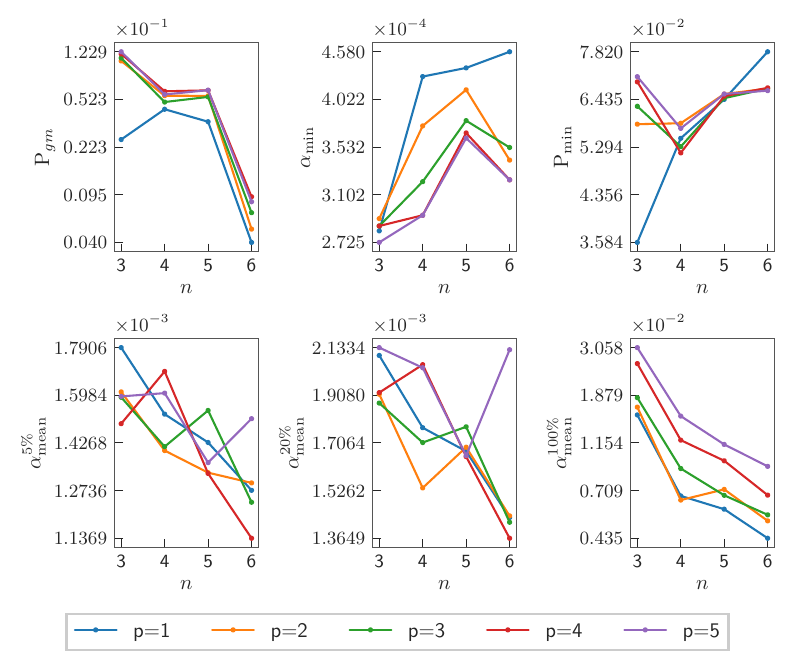}
	\caption[The scaling analysis in n 1]{
 Stochastic simulations of 100 randomly generated DGMVP instances as a function of binary block lengths $n$ with fixed asset size $l=3$. The initial state is chosen as $|\bm{w}\rangle_{ws}$.
	The top three panels from left to right are the probability of measuring the DGMVP solution $\mathrm{P}_{gm}$, the minimal approximation ratio $\alpha_{\mathrm{min}}$, and the probability of measuring the minimal value $\mathrm{P}_{\min}$. The bottom three panels from left to right are the 5, 20, and 100 percentile of the mean approximation ratio, respectively.}
	\label{fig:app_scaling_3}
\end{figure}

\begin{figure}[h]
	\centering
	\includegraphics{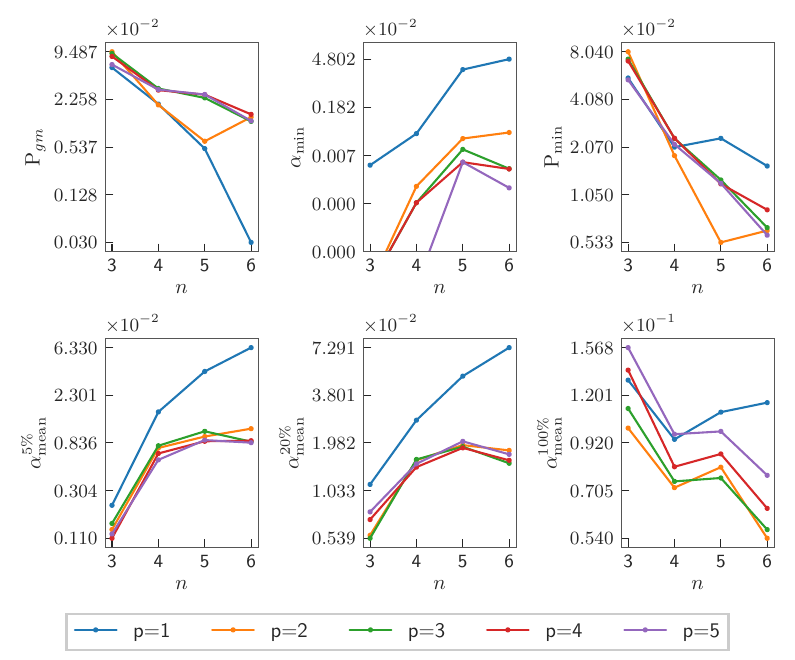}
	\caption[The scaling analysis in n 2]{ Stochastic simulations of 100 randomly generated DGMVP instances as a function of binary block lengths $n$ with fixed asset size $l=3$. The initial state is chosen as $|\bm{w}\rangle_{mb}$.
	The top three panels from left to right are the probability of measuring the DGMVP solution $\mathrm{P}_{gm}$, the minimal approximation ratio $\alpha_{\mathrm{min}}$, and the probability of measuring the minimal value $\mathrm{P}_{\min}$. The bottom three panels from left to right are the 5, 20, and 100 percentile of the mean approximation ratio, respectively.}
	\label{fig:app_scaling_4}
\end{figure}

\end{document}